\newtheorem{defn}{Definition}
\newtheorem{thm}{Theorem}
\newtheorem{prop}{Proposition}
\newtheorem{remark}{Remark}
\newcommand{\bvec}{\bm{\beta}}
\newcommand{\avec}{\bm{\alpha}}
\newcommand{\xvec}{\bm{x}}
\newcommand{\Xmat}{\bm{X}}
\newcommand{\Zmat}{\bm{Z}}
\newcommand{\Wmat}{\bm{W}}
\newcommand{\yvec}{\bm{y}}
\newcommand{\Ymat}{\bm{Y}}
\newcommand{\epvec}{\bm{\varepsilon}}
\newcommand{\epmat}{\bm{E}}
\newcommand{\reals}{\mathbbm{R}}
\newcommand{\Smat}{\bm{\Sigma}}
\newcommand{\Eset}{\mathcal{E}}
\newcommand{\envlp}{\Eset_{\Smat_{\xvec}}(\Smat_{\xvec \yvec})}
\newcommand{\hatenvlp}{\hat \Eset_{\Smat_{\xvec}}(\Smat_{\xvec \yvec})}
\newcommand{\envlpMB}{\Eset_{\bm M}( \bm B )}
\newcommand{\Gmat}{\bm \Gamma}
\newcommand{\evec}{\bm \eta}
\newcommand{\hatmat}[1]{ \widehat {\bm #1} }
\newcommand{\hatmatsub}[2]{\hatmat {#1}_{(#2)}}
\newcommand{\hatmatd}[1]{ \hatmatsub {#1}{d} }
\newcommand{\hatvec}[1]{ \hat {\bm #1} }
\newcommand{\hatvecsub}[2]{ \hat {\bm #1}_{(#2)} }
\newcommand{\matsub}[2]{{\bm #1}_{(#2)}}
\DeclareMathOperator{\E}{E}
\DeclareMathOperator{\Var}{Var}
\DeclareMathOperator{\Cov}{Cov}
\DeclareMathOperator{\spn}{span}
\DeclareMathOperator{\tr}{tr}
\DeclareMathOperator{\vect}{vec}
\DeclareMathOperator{\rank}{rank}
\DeclareMathOperator*{\argmin}{arg\,min}
\DeclareMathOperator{\diag}{diag}
\DeclareMathOperator{\new}{new}
\newcommand{\norm}[1]{\left\lVert#1\right\rVert}
\newcommand{\asto}{\overset{a.s.}{\to}}
\newcommand{\iid}{\overset{iid}{\sim}}
\newcommand{\blind}{1}
\newcommand{\withappendix}{1}
\def\spacingset#1{\renewcommand{\baselinestretch}%
	{#1}\small\normalsize} 
\begin{document}
	
	\if1\withappendix{
	\date{}
	}\fi
	
	\if1\blind
	{
		\title{\bf Envelope-Guided Regularization for Improved Prediction in High-Dimensional Multivariate Regression}
		\author{Tate Jacobson%
				\hspace{.2cm}\\
			Department of Statistics, Oregon State University\\
			and \\
			Oh-Ran Kwon \\
			Department of Data Sciences and Operations,\\ University of Southern California}
		\maketitle
	} \fi
	
	\if0\blind
	{
		\bigskip
		\bigskip
		\bigskip
		\begin{center}
			{\LARGE\bf Envelope-Guided Regularization for Improved Prediction in High-Dimensional Multivariate Regression}
		\end{center}
		\medskip
	} \fi
	
	\bigskip

	\begin{abstract}
		Envelope methods perform dimension reduction of predictors or responses in multivariate regression,  exploiting the relationship between them to improve estimation efficiency. 
		While most research on envelopes has focused on their estimation properties, certain envelope estimators have been shown to excel at prediction in both low and high dimensions. 		
		In this paper, we propose to further improve prediction through envelope-guided regularization (EgReg), a novel method which uses envelope-derived information to guide shrinkage along the principal components (PCs) of the predictor matrix.
		We situate EgReg among other PC-based regression methods and envelope methods to motivate its development.
		We show that EgReg delivers lower prediction risk than a closely related non-shrinkage envelope estimator when the number of predictors $p$ and observations $n$ are fixed and in any alignment. 
		In an asymptotic regime where the true intrinsic dimension of the predictors and $n$ diverge proportionally, we find that the limiting prediction risk of the non-shrinkage envelope estimator exhibits a double descent phenomenon and is consistently larger than the limiting risk for EgReg. 
		We compare the prediction performance of EgReg with envelope methods and other PC-based prediction methods in simulations and a real data example, observing improved prediction performance over these alternative approaches in general.
	\end{abstract}
	
	\noindent%
	{\it Keywords:}  Double descent, predictor envelopes, principal components, shrinkage estimator
	\vfill
	
	\newpage
	\spacingset{1.9}
	
	\section{Introduction}
	Envelope methods aim to improve estimation efficiency in regression by identifying and discarding the components of the variables which are immaterial to the modeling goal.
	Envelopes were first introduced for dimension reduction in the space of the responses in multivariate linear regression \citep{Cook2010}, but have since been adapted for predictor reduction \citep{Cook2013} and simultaneous predictor and response reduction \citep{Cook_simultaneous_2015}. 
	Envelope methodology has been extended beyond linear models, with \citet{Cook_foundations_2015} introducing a general envelope framework and applying it to generalized linear models and Cox regression, and more recent works developing envelope models for reduced rank regression \citep{cook_reduced_rank_2015}, Huber regression \citep{zhou_enveloped_2024} and quantile regression \citep{Ding2021}.
	
	Most research on envelope methods has focused on their ability to improve estimation efficiency rather than prediction.
	One notable exception is partial least squares (PLS) regression \citep{wold_multivariate_1983}, a core method from chemometrics which \citet{Cook2013} showed to be estimating an envelope in the predictor space---known as a \textit{predictor envelope}.
	PLS owes its enduring popularity to its prediction performance in high-dimensional applications.
	The connection between PLS and envelopes has facilitated the theoretical study of PLS predictions:
	\citet{cook_big_2018} proved the consistency of single-component PLS predictions and \citet{cook_partial_2019} derived the rate of multiple-component PLS predictions under various scenarios, finding that they perform best in high-dimensional regressions where many predictors are related to the response.
	
	While research on envelope-based prediction methods has largely focused on PLS, \citet{Cook2013} demonstrated that a likelihood-based envelope estimator can deliver more accurate predictions than PLS in practice, although their analysis was limited to low dimensions. 
	Separately, \citet{rimal2019comparison} used extensive simulations and real data analyses to show that other predictor envelope estimators provide stable and improved performance compared to PLS in high dimensions. 
	These empirical findings suggest that PLS is merely one of many possible methods for improving prediction via envelopes.

	In this paper, we develop a new way of exploiting the envelope structure to improve prediction in high-dimensional regressions.
	Drawing on the connection between predictor envelopes and the principal components (PCs) of the predictor matrix \citep{Zhang2023}, we propose a novel envelope-based shrinkage method which we call \textit{envelope-guided regularization} (EgReg).
	Unlike the best-known PC-based variance reduction methods---principal components regression (PCR) and ridge regression---EgReg utilizes information from the joint covariance structure of the predictors and the responses to supervise shrinkage within the predictor space.
	In taking this approach, the EgReg estimator reduces the variance while avoiding the bias that comes from discarding or shrinking PCs which may be material to the estimation of the response. 
	EgReg differs from other envelope methods in a few respects:
	The EgReg estimator is, to our knowledge, the first to use an envelope-based criterion to guide shrinkage, as well as the first shrinkage method developed within the context of predictor envelopes.
	Whereas most envelope estimators can only be computed when there are fewer predictors $p$ than observations $n$, the EgReg estimator can be computed when $p$ and $n$ are in any alignment.
	Lastly, whereas the theoretical guarantees for other high-dimensional envelope methods rely on sparsity assumptions \citep{su_sparse_2016, Zhang2023}, the guarantees for EgReg hold in both sparse and non-sparse regressions.
	
	Theoretically, we demonstrate that for any $n$ and $p$ the EgReg estimator can always attain a smaller upper bound on its prediction risk than the Non-Iterative Envelope Component Estimation (NIECE) estimator of \citet{Zhang2023}, a closely related but non-shrinkage envelope method.
	In addition, we derive the limiting prediction risk for the EgReg and NIECE estimators as the true intrinsic dimension of the predictor envelope $u^*$ and $n$ diverge, with their ratio $u^*/n$ converging to a positive constant---a first, to our knowledge, for any predictor envelope estimator. 
	Notably, we identify that the NIECE estimator exhibits a double descent asymptotic prediction risk curve---the risk quickly diverges to $\infty$ as the limiting ratio approaches a certain threshold, then quickly decreases once it exceeds that threshold. 
	
	We observe that the EgReg estimator is particularly useful when a predictor envelope structure exists and the envelope does not fall in a space generated solely by high-variance PCs of the predictor matrix.
	However, even in the absence of an envelope structure the EgReg estimator is not at a disadvantage, as we find that its prediction performance remains comparable to that of other multivariate regression methods. 

	The rest of the paper is organized as follows:
	In Section \ref{sec:motivation}, we lay out the motivation for EgReg by comparing three existing PC-based regression methods: PCR, ridge regression, and NIECE. 
	In Section \ref{sec:egreg}, we introduce the EgReg estimator and situate it among these methods.
	In Section \ref{sec:theory}, we compare the prediction risks of EgReg and NIECE in finite samples and as the number of samples and the true envelope dimension diverge.
	In Section \ref{sec:sims}, we conduct simulations comparing EgReg with alternative multivariate regression methods.
	In Section \ref{sec:real data}, we compare the methods' prediction performance on real near-infrared spectroscopy data.
	We close with some discussion in Section \ref{sec:discussion}.
	Technical proofs can be found in the
	\if1\withappendix{appendix.
	}\fi \if0\withappendix{supplementary material.}\fi
	
	\section{PCR, Ridge, and Predictor Envelopes} \label{sec:motivation}
	We start with a multivariate linear regression model
	\begin{equation}
		\yvec' = \xvec'\bvec^* + \epvec', \label{eqn:linear model}
	\end{equation}
	where $\yvec \in \reals^q$, $\xvec \in \reals^{p}$, $\bvec^* \in \reals^{p \times q}$, and $\epvec \in \reals^q$.
	Suppose that $\xvec \sim P_{\xvec}$, $\epvec \sim P_{\epvec}$, and $\xvec$ and $\epvec$ are independent, with $\E[\xvec] = \bm 0$, $\Var(\xvec) = \Smat_{\xvec}$, $\E[\epvec] = \bm 0$, and $\Var(\epvec) = \Smat_{\epvec}$. 
	We omit an intercept without loss of generality for streamlined communication.
	
	Suppose we observe $n$ independent and identically distributed (i.i.d.) copies of $(\yvec, \xvec)$: $\{(\yvec_i, \xvec_i)\}_{i = 1}^n$, where $n$ and $p$ can be in any alignment and $q$ is fixed. 
	Define $\Xmat = (\xvec_1, \ldots, \xvec_n)'$, $\Ymat = (\yvec_1, \ldots, \yvec_n)'$, and $\epmat = (\epvec_1, \ldots, \epvec_n)'$.
	Then we can write the model in matrix form as
	\begin{equation}
		\Ymat = \Xmat\bvec^* + \epmat \label{eqn:matrix linear model}
	\end{equation}
	where $\E[ \epmat ] = \bm 0$ and $\Var(\epmat) = \Smat_{\epvec} \otimes \bm I_n$, with `$\otimes$' denoting the Kronecker product.
	
	We introduce some necessary notation. %
	Let $r = \rank(\Xmat) = \min\{n,p\}$.
	We take the singular value decomposition (SVD)
	of $\Xmat$: $\Xmat = \hatmat U \hatmat D \hatmat V'$, where $\hatmat{U} = [\hatvec u_1 \cdots \hatvec u_r] \in \reals^{n \times r}$ and $\hatmat{V}  = [\hatvec v_1 \cdots \hatvec v_r] \in \reals^{p \times r}$ are the left and right singular vectors, respectively, and $\hatmat D = \diag\{\hat \sigma_j\}$ is the diagonal matrix of singular values $\hat \sigma_1 \geq \hat \sigma_2 \geq \ldots \geq \hat \sigma_r > 0$ of $\Xmat$.
	By definition $\hatmat{V}'\hatmat{V} = \hatmat{U}'\hatmat{U} = \bm I_r$.
	We call
	$\Xmat \hatvec v_i = \hatvec u_i \hat \sigma_i$ 
	the $i$th PC of $\Xmat$ and $\hatvec u_i$ the $i$th normalized PC. 
	Note that the squared singular value $\hat \sigma_i^2$ corresponds to the variance of the $i$th PC.
	We use $\bm I_p$ to denote the $p \times p$ identity matrix and drop the subscript when the dimension is clear from the context.
	We use $\spn(\bm A)$ to denote the space spanned by the columns of a matrix $\bm A$.

	\subsection{Thresholding versus shrinkage}
	To situate EgReg among existing methods, we review how the popular PCR and ridge estimators apply thresholding and shrinkage, respectively, within the space of the PCs of $\Xmat$ to reduce the variance of the models' predictions.

	In PCR, one regresses $\Ymat$ on the first $d$ PCs of $\Xmat$, where $d < r$ is a user-specified tuning parameter.
	We can express the reduced predictor matrix consisting of the first $d$ PCs as $\hatmat{U}_d \hatmat{D}_d$, where $\hatmat{U}_d$ and $\hatmat{D}_d$ denote the first $d$ columns of $\hatmat U$ and $\hatmat D$.
	Using this notation, the PCR estimator of $\Ymat$ can be written as
	\begin{equation}
		\widehat \Ymat^{\text{PCR} }
		= \hatmat{U}_d \hatmat{U}_d' \Ymat = \sum_{j = 1}^d \hatvec u_j \hatvec u_j'\Ymat. \label{eqn:PCR}
	\end{equation}
	In this expression, we see that the PCR estimator projects $\Ymat$ onto the space generated the first $d$ PCs,
	effectively discarding the $r-d$ lowest-variance PCs from the model.

	We can similarly express the ridge estimator in terms of the PCs, as
	\begin{equation}
		\widehat \Ymat^{\text{ridge} } 
		= \hatmat U \hatmat D(\hatmat D^2 + \lambda \bm I_p)^{-1}\hatmat D \hatmat U' \Ymat 
		= \sum_{j = 1}^r \hatvec u_j \frac{\hat \sigma_j^2}{\hat \sigma_j^2 + \lambda} \hatvec u_j'\Ymat \text{.} \label{eqn:ridge}
	\end{equation}
	Here we see that the ridge estimator computes the coordinates of $\Ymat$ with respect to the basis of the normalized PCs, $\{\hatvec u_1, \ldots, \hatvec u_r\}$, then shrinks these coordinates by $ \hat \sigma_j^2/(\hat \sigma_j^2 + \lambda) $.
	Examining this scaling term, we see that PCs with lower variance $\hat \sigma_j^2$ are subject to greater shrinkage, reducing their influence on the estimate of $\Ymat$.
	Unlike in PCR, however, all $r$ PCs are given some weight in the ridge estimator.
	
	Both PCR and ridge modulate the effects of the PCs in the prediction based on their singular values.
	The goal for both methods is to reduce the variance without introducing too much bias.
	The key difference between them is that PCR performs \textit{thresholding} in the space of the PCs while ridge applies \textit{shrinkage}.
	
	\subsection{Envelope scores versus singular values}
	
	One commonly cited drawback of PCR is that the highest-variance PCs of $\Xmat$ may not be those most material to the estimation of $\Ymat$ (see, for example, \citet{jolliffe_note_1982} and \citet{lang_simple_2020}, who called this practice the ``blind selection'' of PCs).
	
	We can demonstrate the blind selection problem with a toy example:
	Suppose that $r=10$, that the singular values of $\Xmat$ are $\hatvec{\sigma} = (1, \ldots, 1, 1/1000)'$, and that the true relationship between $\Xmat$ and $\Ymat$ is 
	$\Ymat = 1000\Xmat \hatvec{v}_{10} + \epmat$---that is, only the 10th PC of $\Xmat$ is material to the regression.
	Because PCR sorts the PCs based on their singular values, the best it can do in this scenario is to select all ten PCs of $\Xmat$, accumulating variance from the nine noise PCs in the process, and the worst it can do is to fail to select the tenth PC at all, increasing the bias.
	Ridge regression suffers from a similar issue, though to a lesser extent: in this toy scenario, it would apply more shrinkage along $\hatvec{u}_{10}$ than the noise PCs, penalizing exactly the wrong components of $\Xmat$ and increasing the bias without a commensurate reduction in variance.
	We term this practice ``blind shrinkage'', mirroring the blind selection performed by PCR.
	
	For both PCR and ridge there is a potential for misalignment between the metric used to select and shrink the PCs---the singular values, which depend only on the covariance structure of $\Xmat$---and the objective of reducing the variance in estimating $\Ymat$.
	As we will see, this misalignment can be avoided by instead using an envelope-based criterion.
	
	\subsubsection{Review of envelope scores}
	\citet{Zhang2023} drew a connection between envelopes and PCR by introducing a new means of selecting PCs to use in the regression: the \textit{envelope scores}.
	Before introducing the scores, we must first review envelopes.
	For a subspace $\mathcal{B} \subseteq \reals^p$, we define $\bm P_{\mathcal B}$ to be the projection onto $\mathcal B$ and $\bm Q_{\mathcal B} = \bm I_p - \bm P_{\mathcal B} $ to be the projection onto the orthogonal complement of $\mathcal B$.
	Using this notation, we can define a few key terms:
	\begin{defn}
		A \emph{reducing subspace} of a symmetric matrix $\bm M \in \reals^{p \times p}$ is defined as a subspace $\mathcal{R} \subseteq \reals^p$ such that
		$
		\bm M = \bm P_{\mathcal R} \bm M \bm P_{\mathcal R} + \bm Q_{\mathcal R} \bm M \bm Q_{\mathcal R}.
		$
		An \emph{$\bm M$-envelope} of a subspace $\mathcal B = \spn (\bm B) \subseteq \reals^p$, denoted as $\envlpMB$, is the intersection of all reducing subspaces of $\bm M$ that contain $\mathcal B$.
	\end{defn}
	We assume without loss of generality that $\bm B \in \reals^{p\times  p}$ is symmetric and positive semi-definite
	(since the column space of $\bm B$ is the same as the column space of $\bm B \bm B'$, we can always replace $\bm B$ with $\bm B \bm B'$ to meet these conditions).
	
	\textit{Predictor envelopes} perform dimension reduction in the space of the predictors of the linear regression model \eqref{eqn:linear model}.
	Recall that $\Smat_{\xvec} = \Var(\xvec)$ and define $\Smat_{\xvec \yvec} = \Cov(\xvec, \yvec)$.
	Within the general envelope framework, the predictor envelope is the $\Smat_{\xvec}$-envelope of $\spn(\Smat_{\yvec \xvec} )$---i.e. $\envlp$.
	\citet{Cook2013} showed that $\mathcal{R}$ is a reducing subspace of $\Smat_{\xvec}$ if and only if $\bm P_{\mathcal R}\xvec$ is uncorrelated with $\bm Q_{\mathcal R}\xvec$ and that $\spn(\Smat_{\yvec \xvec} ) \subseteq \mathcal R$ if and only if $\yvec$ is uncorrelated with $\bm Q_{\mathcal R}\xvec$ given $\bm P_{\mathcal R}\xvec$.
	We refer to $\bm P_{\mathcal R}\xvec$ and $\bm Q_{\mathcal R}\xvec$ as the \textit{material} and \textit{immaterial} parts of $\xvec$ for the regression, respectively.
	Viewed in these terms, $\envlp$ is the smallest subspace $\mathcal R$ such that the projection of $\xvec$ onto $\mathcal R$ contains all of the information about $\yvec$ that can be found in $\xvec$.
	
	Suppose that $\Gmat$ is an orthogonal basis for $\envlp$.
	Then $\bvec^* \in \spn(\Gmat)$, so there exists $\bm \alpha^*$ such that $\bvec^* = \Gmat \bm \alpha^*$ and \eqref{eqn:linear model} can be expressed as
	$
		\yvec' = \xvec' \Gmat \bm \alpha^* + \epvec
	$.
	If $\Gmat$ was known, then we could estimate $\bm \alpha^*$ using the ordinary least squares estimator $\hatvec{\alpha} = (\Gmat' \Xmat'\Xmat \Gmat)^{-1}\Gmat'\Xmat'\Ymat$ and estimate $\bvec^*$ with $\hat \bvec = \Gmat \hatvec{\alpha}$.
	\citet{Cook2013} show that using this envelope estimate of $\bvec^*$ can lead to a substantial reduction in the prediction variance.
	
	Define  $u^* = \dim \{\envlpMB \}$.
	\citet{Zhang2023} proposed the Non-Iterative Envelope Component Estimation (NIECE) algorithm (Algorithm \ref{algo:NIECE population}) to obtain $\envlpMB$ using the eigenvectors of $\bm M$ when $\bm M$, $\bm B$, and $u^*$ are known, referring to this as the ``population-level'' NIECE algorithm. 

	\begin{algorithm}[h]
		\caption{NIECE algorithm \citep{Zhang2023}} \label{algo:NIECE population}
		\begin{algorithmic}[1]
			\State \textbf{Input:} Symmetric $p \times p$ matrices $\bm M >0$ and $\bm B \geq 0$, a number of eigenvectors of $\bm M$ to use $d$, and the envelope dimension $u^*$. Note that $0\leq u^* \leq d \leq p$.
			\State Obtain the first $d$ orthogonal eigenvectors of $\bm M$: $\bm V_d = [\bm v_1, \ldots, \bm v_d] \in \reals^{p \times d}$.%
			\State Calculate the \textbf{envelope scores} $\phi_j = \bm v_j' \bm B \bm v_j$.
			\Statex Sort them in descending order $\phi_{(1)} \geq \ldots \geq \phi_{(d)} $.
			\Statex Define $\bm v_{(j)}$ such that $\phi_{(j)} = \bm v_{(j)}' \bm B \bm v_{(j)}$.
			\State\textbf{Output:} the envelope $\envlpMB = \spn(\bm v_{(1)}, \ldots, \bm v_{(u^*)})$
		\end{algorithmic}
	\end{algorithm}

	It is straightforward to show that the population-level NIECE algorithm recovers the true envelope under mild conditions:
	\begin{prop}[\citet{Zhang2023}]\label{prop:NIECE envelope recovery}
		If $\envlpMB \subseteq \spn (\bm V_d)$ for some $d \leq p$ and the eigenvalues of $\bm M$ are distinct, then 
		\begin{equation}
			\envlpMB = \sum_{\substack{i = 1\\ \bm v_i'\bm B \bm v_i \neq 0}}^d \spn( \bm v_i ), \label{eqn:Zhang envlp expression}
		\end{equation}
		the envelope can be expressed as $\envlpMB = \spn(\bm v_{(1)}, \ldots, \bm v_{(u^*)})$, and the envelope scores satisfy $  \phi_{(1)}	\geq \ldots \geq \phi_{(u^*)} > \phi_{(u^*+1)} = \ldots = \phi_{(d)} = 0.$
	\end{prop}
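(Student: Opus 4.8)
The plan is to identify $\envlpMB$ explicitly as the span of exactly those eigenvectors of $\bm M$ on which $\bm B$ acts nontrivially, and then to read off the three stated claims. Two ingredients are needed. First, a structural description of reducing subspaces: since $\bm M$ is symmetric with distinct eigenvalues $\lambda_1 > \cdots > \lambda_p$ and orthonormal eigenvectors $\bm v_1, \ldots, \bm v_p$, a subspace $\mathcal R$ reduces $\bm M$ if and only if $\mathcal R = \sum_{i \in T} \spn(\bm v_i)$ for some $T \subseteq \{1, \ldots, p\}$. The ``if'' direction is immediate from $\bm M = \sum_{i} \lambda_i \bm v_i \bm v_i'$ and $\bm P_{\mathcal R} = \sum_{i \in T} \bm v_i \bm v_i'$; for ``only if'', the defining identity $\bm M = \bm P_{\mathcal R} \bm M \bm P_{\mathcal R} + \bm Q_{\mathcal R} \bm M \bm Q_{\mathcal R}$ gives $\bm M \bm P_{\mathcal R} = \bm P_{\mathcal R} \bm M \bm P_{\mathcal R}$, so $\bm M$ maps $\mathcal R$ into itself; the restriction is symmetric, hence has an orthonormal eigenbasis whose elements are eigenvectors of $\bm M$, and distinctness of the eigenvalues forces this basis to be $\{\bm v_i : \bm v_i \in \mathcal R\}$. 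Second, since $\bm B \geq 0$ we may write $\bm B = \bm C' \bm C$, so that $\phi_i = \bm v_i' \bm B \bm v_i = \norm{\bm C \bm v_i}^2 \geq 0$ and $\phi_i = 0 \iff \bm C \bm v_i = \bm 0 \iff \bm B \bm v_i = \bm 0$.

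Next I would set $\mathcal R^{\star} = \sum_{i:\, \phi_i \neq 0} \spn(\bm v_i)$ and verify the two properties that pin down the envelope. By the structural fact, $\mathcal R^{\star}$ reduces $\bm M$. It contains $\spn(\bm B)$: since $\{\bm v_i\}_{i=1}^p$ is a complete orthonormal basis, $\bm Q_{\mathcal R^{\star}} = \sum_{i:\, \phi_i = 0} \bm v_i \bm v_i'$, and for each index $i$ in this sum $\phi_i = 0$ forces $\bm B \bm v_i = \bm 0$, hence $\bm v_i \bm v_i' \bm B = \bm 0$ by symmetry of $\bm B$; thus $\bm Q_{\mathcal R^{\star}} \bm B = \bm 0$, i.e. every column of $\bm B$ lies in $\mathcal R^{\star}$. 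Conversely, any reducing subspace $\mathcal R \supseteq \spn(\bm B)$ must contain every $\bm v_i$ with $\phi_i \neq 0$: if instead $\bm v_i \perp \mathcal R$, then $\bm v_i \perp \spn(\bm B)$, so $\bm B \bm v_i = \bm 0$ and $\phi_i = 0$; since $\mathcal R$ is a sum of eigenspaces, contrapositively $\phi_i \neq 0$ implies $\bm v_i \in \mathcal R$, whence $\mathcal R^{\star} \subseteq \mathcal R$. So $\mathcal R^{\star}$ is itself a reducing subspace of $\bm M$ that contains $\spn(\bm B)$ and is contained in every such subspace; it therefore equals their intersection, $\envlpMB$.

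This already gives $\envlpMB = \sum_{i:\, \phi_i \neq 0} \spn(\bm v_i)$, and the hypothesis $\envlpMB \subseteq \spn(\bm V_d)$ forces $\{i : \phi_i \neq 0\} \subseteq \{1, \ldots, d\}$, which is exactly \eqref{eqn:Zhang envlp expression}; in particular $u^{*} = \# \{i \leq d : \phi_i \neq 0\}$. Since every $\phi_i \geq 0$, sorting $\phi_{(1)} \geq \cdots \geq \phi_{(d)}$ places the positive scores first, so $\phi_{(1)} \geq \cdots \geq \phi_{(u^*)} > 0 = \phi_{(u^*+1)} = \cdots = \phi_{(d)}$, and the matching eigenvectors $\bm v_{(1)}, \ldots, \bm v_{(u^*)}$ are precisely those with nonzero score, giving $\envlpMB = \spn(\bm v_{(1)}, \ldots, \bm v_{(u^*)})$. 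I expect the only step requiring genuine care to be the ``only if'' half of the structural fact—verifying that distinct eigenvalues pin down the eigenbasis of the restriction of $\bm M$ to $\mathcal R$—after which the rest is bookkeeping.
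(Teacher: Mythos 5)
Your proof is correct. The paper does not supply its own argument for this proposition---it is imported verbatim from \citet{Zhang2023}---so there is nothing internal to compare against, but your route is the natural one and matches the cited source's logic: with distinct eigenvalues, every reducing subspace of $\bm M$ is a sum of one-dimensional eigenspaces, positive semi-definiteness of $\bm B$ gives $\phi_i = 0 \iff \bm B \bm v_i = \bm 0$, and the span of the eigenvectors with nonzero score is then both a reducing subspace containing $\spn(\bm B)$ and contained in every such subspace, hence equals $\envlpMB$; the hypothesis $\envlpMB \subseteq \spn(\bm V_d)$ confines the nonzero scores to the first $d$ indices, and the sorted-score statements follow by bookkeeping. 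All the delicate steps (the ``only if'' half of the structural characterization via the symmetric restriction of $\bm M$ to $\mathcal R$, and the equivalence $\phi_i = 0 \iff \bm B\bm v_i = \bm 0$) are handled correctly.
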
 
	
	\begin{remark}
		Note that the condition that $\envlpMB \subseteq \spn (\bm V_d)$ is trivially true for $d = p$.
	\end{remark}
	
	Proposition \ref{prop:NIECE envelope recovery} establishes that the ranking of the eigenvectors $\bm v_j$ based on their envelope scores enables the NIECE algorithm to recover $\envlpMB$.
	Motivated by this key property of the population-level algorithm, \citet{Zhang2023} proposed a plug-in estimator for NIECE, replacing $\bm M$ and $\bm B$ by their sample analogs:
	In the case of predictor envelopes $\bm M = \Smat_{\xvec}$ and $\bm B = \Smat_{\xvec \yvec} \Smat_{\yvec \xvec}$, so the sample versions are $\hatmat M = \bm S_{\xvec} =\frac{1}{n} \Xmat'\Xmat$ and $\hatmat {B} = \bm S_{\xvec \yvec} \bm S_{\yvec \xvec}$, where $\bm S_{\xvec \yvec} = \frac{1}{n} \Xmat'\Ymat$.
	For the sample-level NIECE algorithm, the user specifies the number of PCs $u \leq d$ to use.
	The sample envelope scores are $\hat \phi_j = \hatvec v_j' \bm S_{\xvec \yvec} \bm S_{\yvec \xvec } \hatvec v_j$, where $\hatvec v_1, \ldots, \hatvec v_d$ are the first $d$ orthogonal eigenvectors of $\bm S_{\xvec}$, which we obtain via the SVD of $\Xmat$ (\citet{Zhang2023} also propose a ``sparse'' version based on the penalized matrix decomposition of \citet{Witten2009}, though we will focus on the SVD version here).
	We rank the sample envelope scores $\hat \phi_{(1)} \geq \ldots \geq \hat \phi_{(d)}$ and define the $\hatvecsub{v}{j}$ such that $\hat \phi_{(j)} =  \hatvecsub{v}{j}'\bm S_{\xvec \yvec} \bm S_{\yvec \xvec} \hatvecsub{v}{j}$.
	Using this ranking of the eigenvectors, the NIECE envelope estimator is $\hatenvlp = \spn(\hatvec v_{(1)}, \ldots, \hatvec v_{(u)} )$.
	
	\subsubsection{NIECE as a thresholding method}
	As with PCR and ridge, we can express the NIECE estimator of $\Ymat$ in terms of the PCs of $\Xmat$.
	Recall that $\Xmat = \hatmat U \hatmat D \hatmat V'$.
	Define $\hatmatsub{V}{u} = [\hatvecsub{v}{1}, \ldots, \hatvecsub{v}{u} ]$, $\hatmatsub{U}{u} = [\hatvecsub{u}{1}, \ldots, \hatvecsub{u}{u}] $, and  $\hatmatsub{D}{u} = \diag\{ \hat \sigma^2_{(1)} , \ldots, \hat \sigma^2_{(u)} \}  $ where the columns are ranked based on their sample envelope scores.
	The NIECE-estimated basis for $\hatenvlp$ is $\hatmatsub{V}{u}$.
	As such, the reduced predictor matrix is
	$\Xmat \hatmatsub{V}{u} = \hatmatsub{U}{u} \hatmatsub{D}{u}$ and the NIECE estimate of $\Ymat$ is
	\begin{equation}
		\widehat \Ymat^{\text{NIECE}} 
		=\hatmatsub{U}{u} \hatmatsub{U}{u}' \Ymat = \sum_{j = 1}^u \hatvecsub{u}{j} \hatvecsub{u}{j}'\Ymat. \label{eqn:NIECE}
	\end{equation}
	
	Comparing \eqref{eqn:PCR} and \eqref{eqn:NIECE}, we can draw clear parallels between PCR and NIECE:
	Both are thresholding methods, discarding some of the PCs of $\Xmat$ to perform dimension reduction.
	The key difference between them is that PCR ranks and discards PCs based on their variances $\hat \sigma_j^2$ whereas NIECE ranks and discards PCs based on their envelope scores $\hat \phi_j$.
	(PLS, an iterative predictor envelope estimator, can also be thought of as a thresholding method, though we will not dwell on it here.)
	By writing the envelope scores as $\hat \phi _j = \frac{1}{n^2} \hatvec{v}_j'\Xmat'\Ymat \Ymat'\Xmat \hatvec{v}_j$,
	we can see that NIECE selects the $u$ PCs that are most strongly correlated with $\Ymat$.
	In this way NIECE uses information from the response to guide the selection of the PCs, thereby avoiding PCR's practice of blind selection.

	\section{Envelope-Guided Regularization} \label{sec:egreg}
	As with other envelope methods, the goal of NIECE is to estimate the envelope subspace $\envlp$ and thereby improve estimation efficiency.
	While a thresholding approach is well-suited to this task, a different approach may be better for prediction. 
	Prior research has shown that ridge regression tends to provide more accurate predictions than PCR and PLS in practice \citep{frank_statistical_1993}, vindicating its shrinkage-based approach over their thresholding approaches.
	In this same study, PLS was shown to outperform PCR, providing empirical support for envelope-based prediction methods.
	
	Inspired by these empirical results, we propose envelope-guided regularization (EgReg), a shrinkage estimator based on the envelope scores.
	The basic motivation for EgReg is to obtain a ``greedy'' estimate of the envelope basis using the PCs of $\Xmat$ and then apply shrinkage along the PCs using information from the envelope structure to improve prediction.
	Algorithm \ref{algo:EgReg} details how we compute the EgReg estimator.

	\begin{algorithm}[h]
		\caption{Envelope-Guided Regularization (EgReg) Algorithm} \label{algo:EgReg}
		\begin{algorithmic}[1]
			\State \textbf{Input:} $d \leq r$, $\lambda$
			\State Compute the SVD of $\Xmat$: $\Xmat = \hatmat U \hatmat D \hatmat V'$.
			\State Calculate the envelope scores for the first $d$ right singular vectors $\hat \phi_j = \hat{\bm v}_j' \bm S_{\xvec \yvec} \bm S_{\yvec \xvec } \hat{\bm v}_j$ and sort them in decreasing order $\hat \phi_{(1)} \geq \ldots \geq \hat \phi_{(d)} \geq 0$. 
			\Statex Let $\hatmat \Phi= \diag\{ \hat \phi_j \}_{j=1}^p$ and $\hatmatd{\Phi} = \diag\{ \hat \phi_{(j)} \}_{j=1}^d$.
			\Statex Define $ \hatvecsub{v}{j} $ such that $\hat \phi_{(j)} = \hatvecsub{v}{j}' \bm S_{\xvec \yvec} \bm S_{\yvec \xvec } \hatvecsub{v}{j}$ for $j = 1,\ldots, d$. 
			\State Compute the predictor reduction matrix $ \hatmat \Gamma = \hatmatd{V} \hatmatd{D}^{-1}\hatmatd\Phi^{1/2} \in \reals^{p \times \hat d}$.
			\State Solve the ridge problem with respect to the reduced predictors $\Xmat \hatmat \Gamma$:
			\begin{equation}
				\hat \evec = \argmin_{\eta \in \reals^d } \norm{\Ymat - \Xmat \hatmat \Gamma \evec}_F^2 + \lambda \norm{\evec}_F^2 \label{eqn:reduced ridge problem}
			\end{equation}
			\State Compute
			$
			\hat \bvec = \hatmat \Gamma \hat \evec
			$
		\end{algorithmic}
	\end{algorithm}

	Because $\Xmat \hatmat \Gamma = \hatmatd{U} \hatmatd{\Phi}^{1/2} $, the EgReg estimator of $\Ymat$ can be written as
	\begin{equation}
		\widehat \Ymat^{\text{EgReg}}
		= \hatmatd{U}  \hatmatd{\Phi}^{1/2} ( \hatmatd{\Phi} + \lambda \bm I)^{-1}  \hatmatd{\Phi}^{1/2} \hatmatd{U}' \Ymat		
		= \sum_{j = 1}^d \hatvecsub{u}{j} \frac{\hat \phi_{(j)}}{\hat \phi_{(j)} + \lambda} \hatvecsub{u}{j}' \Ymat \label{eqn:envelope guided ridge}
	\end{equation}
	In \eqref{eqn:envelope guided ridge}, we see that the EgReg estimator applies shrinkage along the normalized PCs of $\Xmat$, with greater shrinkage applied to PCs with lower envelope scores.
	The shrinkage mechanism is similar to that of the ridge estimator \eqref{eqn:ridge}, but with a different criterion driving the level of shrinkage applied to each PC---we can describe the regularization in \eqref{eqn:envelope guided ridge} as ``envelope-guided'' and the regularization in \eqref{eqn:ridge} as ``singular-value-guided''.

	Envelope models typically assume that the true envelope dimension $u^*$ is small relative to $n$ and $p$.
	As such, the user-selected dimension $u$ for NIECE is generally also small.
	Because the motivation for EgReg is not to recover $\envlp$ but rather to improve prediction, the number of PCs $d$ used for EgReg does not need to be close to $u^*$. 
	It may even be best to use all $r$ PCs of $\Xmat$ if doing so reduces the prediction variance without introducing too much bias.
	(In the empirical studies in sections \ref{sec:sims} and \ref{sec:real data}, we will see that using all $r$ PCs often leads to the best prediction performance.)
	
	In our review of PCR, ridge, and NIECE, we introduced two axes along which these methods can be divided: thresholding versus shrinkage methods and envelope-score-based versus singular-value-based methods.
	Considered in these terms, EgReg can be viewed as a shrinkage-based alternative to NIECE or as an envelope-score-based analog to ridge regression.
	Table \ref{tbl:PC-based methods} situates EgReg among these other PC-based prediction methods.
	
	\begin{table}[h]
		\caption{Taxonomy of PC-based methods}
		\label{tbl:PC-based methods}
		\centering
		\begin{tabular}{l|cc|}
			\multicolumn{1}{l}{} & Thresholding & \multicolumn{1}{c}{Shrinkage} \\  
			\cline{2-3}
			Singular values & PCR & Ridge  \\
			Envelope scores & Envelopes (NIECE) & \textbf{EgReg}\\
			\cline{2-3} 
		\end{tabular}
	
	\end{table}
	
	As we have noted, the envelope scores measure the strength of the association between the PCs of $\Xmat$ and $\Ymat$.
	By using the envelope scores rather than the singular values to guide PC shrinkage and selection, EgReg and NIECE avoid the PCR and ridge practices of blind selection and blind shrinkage.
	However, as a thresholding procedure NIECE still runs the risk of discarding potentially important PCs.
	EgReg mitigates this risk with its shrinkage approach.
	For the remainder of the paper, we examine the theoretical and empirical prediction performance of EgReg relative to alternative PC-based methods.
	
	\section{Theory}\label{sec:theory}
	To support our characterization of EgReg as a method of improving on the predictions of other envelope estimators, we compare the theoretical prediction risks of EgReg and NIECE. 
	We first examine the case where $n$ and $p$ are fixed, allowing them to be in any alignment.
	We then examine the limiting prediction risks of EgReg and NIECE as $n$ and the true envelope dimension $u^*$ diverge proportionally, with $u^*/n \to \gamma \in (0, \infty)$ .
	
	\subsection{Prediction risk}
	Suppose that $\xvec_{\new} \sim P_{\xvec}$ is independent of the training sample. 
	For an estimator $\hat \bvec$, we define the prediction risk to be:
	\begin{equation}
		R(\hat \bvec|\Xmat) = \E\left[ \norm{ \xvec_{\new}' ( \hat \bvec - \bvec^* ) }_2^2 \middle| \Xmat \right]. 
	\end{equation}
	
	We denote the NIECE estimator with tuning parameter $u$ by $\hat \bvec^N(u)$ and the EgReg estimator with tuning parameters $d$ and $\lambda$ by $\hat \bvec^E(d,\lambda)$.
	The EgReg estimator can be written as
	$
		\hat \bvec^E(d,\lambda) 
		= \hatmatd{V} \hatmatd{D}^{-1} \hatmatd \Phi (\hatmatd \Phi + \lambda \bm I_d )^{-1} \hatmatd{U}'\Ymat 
	$
	and the NIECE estimator can be written as
	$
		\hat \bvec^N
		= \hatmat{V}_{(u)} ( \hatmat{V}_{(u)}' \Xmat' \Xmat \hatmat{V}_{(u)} )^{-1} \hatmat{V}_{(u)}' \Xmat'\Ymat = \hatmat{V}_{(u)} \hatmat{D}_{(u)}^{-1} \hatmat{U}_{(u)}' \Ymat.
	$
	
	Suppose, without loss of generality, that $u \leq d$.
	Then both the EgReg and NIECE estimators fall within $\spn(\hatmatd{V})$.
	Considering the projection of $\bvec^*$ onto $\spn(\hatmatd V)$, we see that the prediction error of both estimators can be broken down as
	$
		\xvec_{\new}' ( \hat \bvec - \bvec^* )
		= \xvec_{\new}' ( \hat \bvec - \bm P_{\hatmatd{V}}\bvec^* ) + \xvec_{\new}' ( \bm P_{\hatmatd{V}}\bvec^* - \bvec^*) 
	$ 
	and that the prediction risk admits the following upper bound in terms of reducible and irreducible components
	\begin{align}
		R(\hat \bvec|\Xmat) 
		& \leq 2\underbrace{\E\left[ \norm{ \xvec_{\new}' ( \hat \bvec - \bm P_{\hatmatd{V}}\bvec^* ) }_2^2 \middle| \Xmat \right]}_{\text{reducible risk}}
		+ 2\underbrace{\E\left[ \norm{ \xvec_{\new}' ( \bm P_{\hatmatd{V}} \bvec^* - \bvec^* ) }_2^2 \middle| \Xmat \right]}_{\text{irreducible risk}} \label{eqn:reducible irreducible risk decomp}%
	\end{align}
	The irreducible risk captures the bias induced by projecting $\bvec^*$ into the column space of $\hatmatd{V}$.
	Because both estimators fall within $\spn(\hatmatd{V})$, they induce the same irreducible risk.
	As such, we will compare these estimators based on their reducible risks.
	
	It is straightforward to show that the reducible prediction risk admits the following bias-variance decomposition:
	\begin{align}
		R_r(\hat \bvec|\Xmat) 
		& = \norm{ \vect ( \E[\hat \bvec | \Xmat] - \bm P_{\hatmatd{V}} \bvec^*) }^2_{\bm I_q \otimes \Smat_{\xvec}} 
		+ \tr\{\Var(\vect (\hat \bvec) | \Xmat)(\bm I_q \otimes \Smat_{\xvec}) \} \notag \\
		& =  B_r(\hat \bvec) + V(\hat \bvec)  \label{eqn:generic bv decomp}
	\end{align}
	where $\norm{\bm v}_{\bm W}^2 = \bm v' \bm W \bm v$ is the $\bm W$-inner product for a vector $\bm v \in \reals^q$ and matrix $\bm W \in \reals^{q\times q}$.
	We refer to $B_r(\hat \bvec)$ as the reducible bias component and $ V(\hat \bvec) $ as the variance component.

	\subsection{Finite-sample risk comparison}
	To facilitate our theoretical analysis,  we assume that the envelope scores $\{\hat \phi_j\}_{j = 1}^r$ are given and independent of the response $\Ymat$---this would occur, for example, if we had two samples $(\Xmat, \Ymat)$ and $(\Xmat, \widetilde{\Ymat})$ with the same predictors and independent realizations of the response, and estimated the envelope scores using only $(\Xmat, \widetilde{\Ymat})$.
	We treat $\hatmat \Phi$ as given and study the conditional risk $R_r(\hat \bvec|\Xmat,\hatmat \Phi)$. 
	Consequently, the risk does not account for the variability arising from the estimation of these scores. 
	
	\subsubsection{Bias-variance decompositions}
	We use the bias-variance decomposition \eqref{eqn:generic bv decomp} to compare the reducible prediction risks for EgReg and NIECE.
	One can show that the EgReg estimator has reducible bias
	\begin{align}
		\E\left[ \hat \bvec^E(d, \lambda) \middle | \Xmat, \hatmat \Phi \right] - \bm P_{\hatmatd{V}}\bvec^*
		& = \hatmatd{V}  \left[  \hatmatd \Phi (\hatmatd \Phi + \lambda \bm I_d )^{-1} - \bm I_d  \right] \hatmatd{V}'\bvec^* \notag \\
		& = - \lambda \hatmatd{V}(\hatmatd \Phi + \lambda \bm I_d )^{-1} \hatmatd{V}'\bvec^* \label{eqn:bias for egreg}
	\end{align}
	and variance
	\begin{align*}
		\Var( \vect(\hat \bvec^E(d, \lambda) ) | \Xmat, \hatmat \Phi)
		& = \Var( \vect(  \hatmatd{V} \hatmatd{D}^{-2} \hatmatd \Phi (\hatmatd \Phi + \lambda \bm I_d )^{-1} \hatmatd{V}' \Xmat\epmat ) | \Xmat, \hatmat \Phi )\\
		& = \Smat_{\epvec} \otimes \hatmatd{V} \hatmatd{D}^{-2} \hatmatd \Phi^2 (\hatmatd \Phi + \lambda \bm I_d )^{-2} \hatmatd{V}'.
	\end{align*}
	Combining these expressions, we can derive the following expression for the reducible prediction risk of the EgReg estimator:
	\begin{multline}
		R_r(\hat \bvec^E(d, \lambda)|\Xmat,\hatmat \Phi)
		= \tr\{  \Smat_{\epvec} \} \tr \{ \hatmatd{V} \hatmatd{D}^{-2} \hatmatd \Phi^2 (\hatmatd \Phi + \lambda \bm I_d )^{-2} \hatmatd{V}' \Smat_{\xvec} \} \\
		+ \lambda^2 \vect(\bvec^*)' \left( \bm I_q \otimes \hatmatd{V}(\hatmatd \Phi + \lambda \bm I_d )^{-1} \hatmatd{V}' \Smat_{\xvec} \hatmatd{V}(\hatmatd \Phi + \lambda \bm I_d )^{-1} \hatmatd{V}' \right) \vect(\bvec^*). \label{eqn:EgReg prediction risk}
	\end{multline}

	Moving on to NIECE, we find that
	the reducible bias is given by
	$
		\E[\hat \bvec^N | \Xmat, \hatmat \Phi] - \bm P_{\hatmatd{V}} \bvec^* = [\hatmat{V}_{(u)} \hatmat{V}_{(u)}' - \hatmatd{V} \hatmatd{V}'] \bvec^* . 
	$
	Note that if $u = d$, then NIECE does not reduce the predictors to a space any smaller than $\spn(\hatmatd{V})$ and its reducible bias is $0$.
	Moving on to the variance, we see that 
	$
		\Var( \vect(\hat{\bvec}^N ) | \Xmat, \hatmat \Phi)
		= \Var( \vect(  \hatmat{V}_{(u)} \hatmat{D}_{(u)}^{-2} \hatmat{V}_{(u)}' \Xmat'\epmat ) | \Xmat )
		= \Smat_{\epvec} \otimes \hatmat{V}_{(u)} \hatmat{D}_{(u)}^{-2} \hatmat{V}_{(u)}' \Xmat' \Xmat \hatmat{V}_{(u)} \hatmat{D}_{(u)}^{-2} \hatmat{V}_{(u)}'
		= \Smat_{\epvec} \otimes \hatmat{V}_{(u)} \hatmat{D}_{(u)}^{-2} \hatmat{V}_{(u)}'
	$
	where the final equality holds because $\Xmat' \Xmat = \hatmat{V} \hatmat{D}^2 \hatmat{V}'$ and $\hatmat{V}_{(u)}' \hatmat{V} \hatmat{D}^2 \hatmat{V}' \hatmat{V}_{(u)} = \hatmat{D}_{(u)}^2$. 
	All together, the reducible prediction risk for NIECE is given by:
	\begin{multline}
		R_r(\hat \bvec^N(u)|\Xmat,\hatmat \Phi)
		= \tr\{ \Smat_{\epvec} \} \tr\{ \hatmat{V}_{(u)} \hatmat{D}_{(u)}^{-2} \hatmat{V}_{(u)}' \Smat_{\xvec} \}\\
		+ \vect(\bvec^*)' \left( \bm I_q \otimes [\hatmat{V}_{(u)} \hatmat{V}_{(u)}' - \hatmatd{V} \hatmatd{V}'] \Smat_{\xvec}[\hatmat{V}_{(u)} \hatmat{V}_{(u)}' - \hatmatd{V} \hatmatd{V}'] \right) \vect(\bvec^*). \label{eqn:NIECE prediction risk}
	\end{multline}
	
	\subsubsection{Reducible prediction risk comparison}
	The following theorem directly compares the reducible prediction risks for EgReg and NIECE for fixed $n$ and $p$:
	\begin{thm} \label{thm:egreg vs niece prediction risk}
		$\forall_{n>0, p>0, u \leq r}, \exists_{d \geq u, \lambda > 0}$ such that  $R_r(\hat \bvec^E(d, \lambda)|\Xmat,\hatmat \Phi) < R_r(\hat \bvec^N(u)|\Xmat,\hatmat \Phi)$.
	\end{thm}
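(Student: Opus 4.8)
The plan is to prove the theorem with the explicit choice $d=u$ together with a small ridge parameter $\lambda>0$, which reduces the statement to the classical fact that a little shrinkage strictly lowers the risk of a least–squares–type estimator. With $d=u$ the estimators $\hat\bvec^N(u)$ and $\hat\bvec^E(u,\lambda)$ both lie in $\spn(\hatmatd{V})=\spn(\hatmatsub{V}{u})$, so by \eqref{eqn:reducible irreducible risk decomp} they share the same irreducible risk and it suffices to compare the reducible risks \eqref{eqn:EgReg prediction risk} and \eqref{eqn:NIECE prediction risk}. Moreover, at $d=u$ the NIECE reducible bias $[\hatmatsub{V}{u}\hatmatsub{V}{u}'-\hatmatd{V}\hatmatd{V}']\bvec^{*}$ vanishes, so $R_r(\hat\bvec^N(u)\mid\Xmat,\hatmat\Phi)$ collapses to the pure–variance quantity $\tr\{\Smat_\epvec\}\,\tr\{\hatmatsub{V}{u}\hatmatsub{D}{u}^{-2}\hatmatsub{V}{u}'\Smat_\xvec\}$, which is the exact $\lambda=0$ value of the variance term in \eqref{eqn:EgReg prediction risk}.

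With that reduction in hand, I would study $R_r(\hat\bvec^E(u,\lambda)\mid\Xmat,\hatmat\Phi)$ as a function of $\lambda\ge 0$, assuming the (mild) nondegeneracy that the $u$ largest envelope scores are positive so $\hatmatd{\Phi}\succ 0$. Writing the EgReg variance term as $\tr\{\Smat_\epvec\}\,\tr\{\hatmatd{\Phi}^{2}(\hatmatd{\Phi}+\lambda\bm I)^{-2}\bm C\}$ with $\bm C=\hatmatsub{D}{u}^{-2}\hatmatsub{V}{u}'\Smat_\xvec\hatmatsub{V}{u}$, each diagonal entry of $\bm C$ is strictly positive (the singular vectors $\hatvecsub{v}{j}$ lie in the range of $\Smat_\xvec$), and each scalar factor $\hat\phi_{(j)}^{2}/(\hat\phi_{(j)}+\lambda)^{2}$ equals $1$ at $\lambda=0$ with derivative $-2/\hat\phi_{(j)}<0$ there; hence, provided $\tr\{\Smat_\epvec\}>0$, this term equals $R_r(\hat\bvec^N(u)\mid\Xmat,\hatmat\Phi)$ at $\lambda=0$ and has strictly negative right–derivative. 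Meanwhile, by \eqref{eqn:bias for egreg} the EgReg reducible bias is $-\lambda\,\hatmatsub{V}{u}(\hatmatd{\Phi}+\lambda\bm I)^{-1}\hatmatsub{V}{u}'\bvec^{*}$, so its reducible bias component in \eqref{eqn:generic bv decomp} is $O(\lambda^{2})$ as $\lambda\to 0^{+}$ since $(\hatmatd{\Phi}+\lambda\bm I)^{-1}$ stays bounded. Combining, $R_r(\hat\bvec^E(u,\lambda)\mid\Xmat,\hatmat\Phi)-R_r(\hat\bvec^N(u)\mid\Xmat,\hatmat\Phi)=-c\lambda+O(\lambda^{2})$ for a constant $c>0$, which is strictly negative for all small enough $\lambda>0$; fixing any such $\lambda$ with $d=u$ proves the claim.

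The part I expect to require the most care is the boundary bookkeeping rather than the inequality itself. The derivative argument needs $\hatmatd{\Phi}\succ 0$; if some of the $u$ largest envelope scores are zero, then EgReg assigns weight exactly $0$ to those directions for every $\lambda>0$, so $\lim_{\lambda\to 0^{+}}\hat\bvec^E(u,\lambda)$ discards them instead of recovering $\hat\bvec^N(u)$, and one must then argue that for small $\lambda$ the variance saved by dropping the zero–score directions dominates the bias they introduce—or simply restrict, as is natural, to $u$ no larger than the number of positive envelope scores. One also needs to dispense with the trivial non‑strict cases: if $\Smat_\epvec=\bm 0$ then NIECE with $d=u$ is already exactly unbiased with zero variance and no improvement is possible, so the statement should be read with $\Smat_\epvec$ (and $\Smat_\xvec$) nondegenerate, while the sub‑case $\bvec^{*}=\bm 0$ is immediate because the EgReg bias term then vanishes for every $\lambda$ and strict improvement follows from the variance term alone.
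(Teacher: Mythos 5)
Your proposal is correct and follows essentially the same route as the paper's proof: set $d=u$ so that $\hat \bvec^E(u,0)=\hat \bvec^N(u)$ and the risks coincide at $\lambda=0$, then show the EgReg risk has strictly negative right-derivative in $\lambda$ at $0$ because the variance decreases at first order while the bias, by \eqref{eqn:bias for egreg}, contributes only $O(\lambda^2)$. The paper merely packages this more quantitatively, using von Neumann's trace inequality to exhibit an explicit interval $\lambda < \tr\{\Smat_{\epvec}\}/(\sigma_1(\bvec^*{\bvec^*}')\sigma_1(\hatmatd{\Phi}^{-1}\hatmatd{D}^2))$ on which the derivative is negative; your attention to the degenerate cases ($\hatmatd{\Phi}$ singular, $\Smat_{\epvec}=\bm 0$) is a point the paper leaves implicit.
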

	
	Theorem \ref{thm:egreg vs niece prediction risk} establishes that for NIECE with any $u$, there always exists a pair $(d, \lambda)$ such that EgReg has lower reducible prediction risk.
	As we have noted, we are not concerned with whether the risk-minimizing $d$ for EgReg corresponds to the true envelope dimension, but are instead using shrinkage along the $d \geq u$ PCs to reduce the variance and the prediction risk.
	
	We note that, combined with \eqref{eqn:reducible irreducible risk decomp}, Theorem \ref{thm:egreg vs niece prediction risk} implies that EgReg admits a smaller upper bound for its prediction risk than NIECE.
	In addition, we highlight that both EgReg and NIECE can be computed when $p > n$ and that the bound in Theorem \ref{thm:egreg vs niece prediction risk} holds regardless of the alignment of $n$ and $p$.
	Moreover, Theorem \ref{thm:egreg vs niece prediction risk} allows the true model \eqref{eqn:linear model} to be either sparse or non-sparse.
	Thus EgReg can improve on the prediction performance of NIECE, a thresholding-based envelope estimator, under quite general settings for \eqref{eqn:linear model}.
	
	\subsection{Limiting prediction risk}
	To complement our finite-sample results, we derive the limiting prediction risks for EgReg and NIECE.
	This is, to our knowledge, the first time the limiting prediction risk of any predictor envelope estimator has been studied. 
	
	For the remainder of our theoretical study, we assume that $\Xmat = \Zmat \Smat_{\xvec}^{1/2}$ where the entries of $\Zmat$ are i.i.d for a distribution with zero mean and unit variance.
	The spectral decomposition of $\Smat_{\xvec}$ yields $\Smat_{\xvec} = \bm V \bm D^2 \bm V'$ where $\bm V = [\bm v_1 \cdots \bm v_p] \in \reals^{p \times p}$, the matrix of orthogonal eigenvectors of $\Smat_{\xvec}$, and $\bm D^2$ is the diagonal matrix of the eigenvalues of $\Smat_{\xvec}$. %

	In the envelope literature, some studies assume that certain envelope-related parameters are known to aid in understanding fundamental theoretical principles \citep{Cook2013,Cook_foundations_2015,Cook_simultaneous_2015}. 
	To facilitate our study of the limiting risks, we assume that $\Smat_{\xvec}$ and $\bm \Phi$ are known. 		
	As \citet{Zhang2023} have shown, the envelope dimension satisfies $u^* = \#\{\phi_j > 0\}$ %
	and $\matsub{V}{u^*}$ forms a basis for the predictor envelope $\envlp$.
	This implies that $\bvec^* \in \spn\{\matsub{V}{u^*}\}$.
	We further assume that $\matsub{\Phi}{u^*} = \matsub{D}{u^*}^2 = \bm I_{u^*}$ (note that this is weaker than the isotropic assumption $\Smat_{\xvec} = \bm I_p$, as it only constrains the eigenvectors of $\Smat_{\xvec}$ that comprise the envelope).

	\subsubsection{Prediction risks for EgReg and NIECE}
	We decompose the risk for an estimator $\hat \bvec$ of $\bvec^*$ into bias and variance components:
	\begin{align*}
		R(\hat \bvec|\Xmat) 
		& = \norm{  \vect ( \E[\hat \bvec | \Xmat] - \bvec^*) }^2_{\bm I_r \otimes \Smat_{\xvec}} 
		+ \tr\{\Var(\vect (\hat \bvec) | \Xmat)(\bm I_r \otimes \Smat_{\xvec}) \} \\
		& = B(\hat \bvec) + V(\hat \bvec)
	\end{align*}
	(note that $B(\hat \bvec)$ here differs from the reducible bias in \eqref{eqn:generic bv decomp}).
	
	Define $\Gmat = \matsub{V}{u^*} \matsub{D}{u^*}^{-1} \matsub{\Phi}{u^*}^{1/2} $. 
	The EgReg estimator (with $\Smat_{\xvec}$ and $\bm \Phi$ known) is given by
	$
		\hat \bvec^{E}(\lambda) = \Gmat( \Gmat'\Xmat'\Xmat\Gmat + n\lambda \bm I )^{-1} \Gmat' \Xmat'\Ymat.
	$
	Since $\matsub{\Phi}{u^*} = \matsub{D}{u^*}^2 = \bm I$, we see that $\Gmat'\Gmat = \bm I$.
	Combining this with the fact that $\bvec^* \in \spn\{ \Gmat \}$, we have $\bvec^* = \Gmat \Gmat'\bvec^*$.
	Using this fact, one can show
	$
		B(\hat \bvec^E(\lambda)) = \lambda^2 \tr\{ \Gmat'\bvec^* {\bvec^*}'\Gmat (\bm S_{\Xmat \Gmat} + \lambda \bm I)^{-2} \},
	$
	where $\bm S_{\Xmat \Gmat} = \frac{1}{n}  \Gmat'\Xmat'\Xmat\Gmat $.
	It is similarly straightforward to derive
	$
		V(\hat \bvec^E(\lambda)) = \frac{1}{n} \tr\{ \Smat_{\epvec} \} \tr\{ (\bm S_{\Xmat \Gmat} + \lambda \bm I)^{-2}\bm S_{\Xmat \Gmat}  \}.
	$
	
	When $u^* \leq n$, the NIECE estimator (with $\Smat$ and $\bm \Phi$ known) can be written as
	$
		\hat \bvec^{N} = \matsub{V}{u^*}(\matsub{V}{u^*}' \Xmat' \Xmat \matsub{V}{u^*})^{-1} \matsub{V}{u^*}\Xmat'\Ymat.
	$
	Because $\bvec^* \in \spn( \matsub{V}{u^*} )$ and the columns of $\matsub{V}{u^*}$ are orthogonal, we have $\bvec^* = \matsub{V}{u^*}\matsub{V}{u^*}'\bvec^*$. 
	Using this fact, it is straightforward to show that the bias of the NIECE estimator is $B(\hat \bvec^N) = 0$.
	One can further show that
	$
		V( \hat \bvec^N) = \frac{1}{n} \tr\{\Smat_{\epvec}\} \tr \{\Smat_{\Xmat \matsub{V}{u^*}} \bm S_{\Xmat \matsub{V}{u^*}}^{-1}  \} 
	$
	where $\Smat_{\Xmat \matsub{V}{u^*}} = \Var(\Xmat \matsub{V}{u^*}) = \matsub{V}{u^*}'\Smat \matsub{V}{u^*} = \matsub{D}{u^*}^2$
	and $\bm S_{\Xmat \matsub{v}{u^*}} = \frac{1}{n} \matsub{V}{u^*}' \Xmat' \Xmat \matsub{V}{u^*} $.
	Since $\matsub{D}{u^*}^2 = \bm I_u$, this further simplifies to
	$
	V( \hat \bvec^N) = \frac{1}{n} \tr\{\Smat_{\epvec}\} \tr \{\bm S_{\Xmat \matsub{V}{u^*}}^{-1}  \} 
	$
	When $u^* > n$, we take the NIECE estimator to be the limit of the EgReg estimator as $\lambda \to 0^+$---that is,
	$
	\hat \bvec^{N} = \lim_{\lambda \to 0^+} \hat \bvec^E(\lambda)
	$. %
	
	\subsubsection{Limiting risk analysis}
	Having derived finite-sample expressions for the prediction risks of EgReg and NIECE, we derive their limiting prediction risks.
	
	\begin{thm} \label{thm:limiting risk}
		Suppose that $u^*$ and $n$ diverge, with $u^*/n \to \gamma \in (0, \infty)$.
		Suppose that $\matsub{\Phi}{u^*} = \matsub{D}{u^*}^2 = \bm I_u$, that the predictors $\xvec_i$ are i.i.d with mean $0$ and $\E[|\xvec_i \matsub{V}{u^*}|^{8 + \delta}] < \infty$ for some $\delta>0$, and that $\tr\{ \Gmat'\bvec^*{\bvec^*}'\Gmat \} = c^2$ for all $n$.
		Then, almost surely, the limiting prediction risk for NIECE is
		\begin{equation}
			R(\hat \bvec^N |\Xmat) \to
			\begin{cases}
				\tr\{ \Smat_{\epvec} \}\frac{\gamma}{1 - \gamma}  & \text{ for } \gamma < 1\\
				c^2(1 - \frac{1}{\gamma})+ \tr\{ \Smat_{\epvec} \} \frac{1}{\gamma - 1} & \text{ for } \gamma > 1
			\end{cases} \label{eqn:limiting NIECE risk}
		\end{equation}
		and the limiting prediction risk for EgReg is
		\begin{equation}
			R(\hat \bvec^E(\lambda)|\Xmat) \to c^2 \lambda^2 m'(-\lambda) + \tr\{ \Smat_{\epvec} \} \gamma( m(-\lambda) - \lambda m'(-\lambda) )
		\end{equation}
		where 
		$
		m(z) = [ 1 - \gamma - z - \sqrt{ (1 - \gamma - z)^2 - 4 \gamma z } ]/(2\gamma z)
		$.
		Moreover, the limiting EgReg risk is minimized at $\lambda^* = \tr\{ \Smat_{\epvec} \} \gamma/c^2$, with
		$
		R(\hat \bvec^E(\lambda^*)|\Xmat) \to \tr\{ \Smat_{\epvec} \} \gamma m(-\lambda^*)
		$.
	\end{thm}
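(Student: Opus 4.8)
The plan is to feed the finite-sample bias and variance formulas just derived into Marchenko--Pastur (MP) theory. The structural simplification that makes this work: under $\matsub{\Phi}{u^*} = \matsub{D}{u^*}^2 = \bm I_{u^*}$, the columns of $\matsub{V}{u^*}$ are eigenvectors of $\Smat_{\xvec}$ with eigenvalue one, so $\Gmat = \matsub{V}{u^*}$ and $\Smat_{\xvec}^{1/2}\matsub{V}{u^*} = \matsub{V}{u^*}$, giving $\Xmat\Gmat = \Zmat\Smat_{\xvec}^{1/2}\matsub{V}{u^*} = \Zmat\matsub{V}{u^*} =: \bm W$. This $\bm W$ is $n\times u^*$ with i.i.d.\ rows of mean $\bm 0$, covariance $\matsub{V}{u^*}'\matsub{V}{u^*}=\bm I_{u^*}$, and finite $(8+\delta)$-th absolute moments, so $\bm S_{\Xmat\Gmat} = \tfrac1n\bm W'\bm W$ is a sample covariance matrix with identity population covariance in the regime $u^*/n\to\gamma$. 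I would then invoke the following standard consequences of RMT, all a.s.: (i) the empirical spectral distribution of $\bm S_{\Xmat\Gmat}$ converges to the MP law $\mu_\gamma$ whose Stieltjes transform is the $m(\cdot)$ in the statement; (ii) (Bai--Yin) the extreme nonzero eigenvalues converge to $(1\mp\sqrt\gamma)^2$, so the nonzero spectrum stays in a fixed compact subset of $(0,\infty)$; (iii) $\tfrac1{u^*}\tr\{(\bm S_{\Xmat\Gmat}+\lambda\bm I)^{-1}\}\to m(-\lambda)$ and $\tfrac1{u^*}\tr\{(\bm S_{\Xmat\Gmat}+\lambda\bm I)^{-2}\}\to m'(-\lambda)$ for $\lambda>0$; (iv) for deterministic $\bm a$, $\bm a'(\bm S_{\Xmat\Gmat}+\lambda\bm I)^{-1}\bm a\to\norm{\bm a}_2^2\,m(-\lambda)$ and $\bm a'(\bm S_{\Xmat\Gmat}+\lambda\bm I)^{-2}\bm a\to\norm{\bm a}_2^2\,m'(-\lambda)$. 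The $(8+\delta)$-th moment hypothesis is precisely what yields the a.s.\ versions of (ii)--(iv).

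For EgReg, decompose $\Gmat'\bvec^*{\bvec^*}'\Gmat=\sum_{k=1}^q\sigma_k\bm a_k\bm a_k'$ with $\norm{\bm a_k}_2=1$ and $\sum_k\sigma_k=\tr\{\Gmat'\bvec^*{\bvec^*}'\Gmat\}=c^2$. Applying (iv) termwise to the bias formula gives $B(\hat\bvec^E(\lambda))=\lambda^2\sum_k\sigma_k\,\bm a_k'(\bm S_{\Xmat\Gmat}+\lambda\bm I)^{-2}\bm a_k\to c^2\lambda^2 m'(-\lambda)$. For the variance, write $(\bm S_{\Xmat\Gmat}+\lambda\bm I)^{-2}\bm S_{\Xmat\Gmat}=(\bm S_{\Xmat\Gmat}+\lambda\bm I)^{-1}-\lambda(\bm S_{\Xmat\Gmat}+\lambda\bm I)^{-2}$ and use $\tfrac1n=\tfrac{u^*}{n}\tfrac1{u^*}$ with (iii) to get $V(\hat\bvec^E(\lambda))\to\tr\{\Smat_{\epvec}\}\gamma\bigl(m(-\lambda)-\lambda m'(-\lambda)\bigr)$. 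Adding the two limits gives the claimed EgReg risk.

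For NIECE I split on the sign of $\gamma-1$. If $\gamma<1$ then eventually $u^*<n$, the bias is exactly zero, and since the spectrum of $\bm S_{\Xmat\Gmat}$ stays compactly inside $(0,\infty)$ by (ii), $\tfrac1{u^*}\tr\{\bm S_{\Xmat\Gmat}^{-1}\}\to\int x^{-1}\,d\mu_\gamma(x)=m(0)=(1-\gamma)^{-1}$, hence $V(\hat\bvec^N)=\tfrac1n\tr\{\Smat_{\epvec}\}\tr\{\bm S_{\Xmat\Gmat}^{-1}\}\to\tr\{\Smat_{\epvec}\}\gamma/(1-\gamma)$. If $\gamma>1$ then eventually $u^*>n$ and $\hat\bvec^N=\Gmat\bm S_{\Xmat\Gmat}^{+}\tfrac1n\Gmat'\Xmat'\Ymat$; using $\bvec^*=\Gmat\Gmat'\bvec^*$ and $\Gmat'\Smat_{\xvec}\Gmat=\bm I_{u^*}$ one gets $B(\hat\bvec^N)=\tr\{\bm\Pi\,\Gmat'\bvec^*{\bvec^*}'\Gmat\}$ with $\bm\Pi=\bm I_{u^*}-\bm S_{\Xmat\Gmat}^{+}\bm S_{\Xmat\Gmat}$ (the orthogonal projection onto $\ker\bm W$), and $V(\hat\bvec^N)=\tfrac1n\tr\{\Smat_{\epvec}\}\tr\{\bm S_{\Xmat\Gmat}^{+}\}$. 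The rank-one pieces of the bias concentrate, $\bm a_k'\bm\Pi\bm a_k\to 1-1/\gamma$ — this follows from (iv) as $\lambda\downarrow0$, since $\lambda\,\bm a'(\bm S_{\Xmat\Gmat}+\lambda\bm I)^{-1}\bm a\to\norm{\bm a}_2^2\lambda m(-\lambda)\to\norm{\bm a}_2^2(1-1/\gamma)$, with the interchange of $\lambda\downarrow0$ and $n\to\infty$ controlled via (ii) (the nonzero eigenvalues are bounded below) — so $B(\hat\bvec^N)\to c^2(1-1/\gamma)$; and since the point mass of $\mu_\gamma$ at $0$ contributes nothing to $\tr\{\bm S_{\Xmat\Gmat}^{+}\}$ while the absolutely continuous part is bounded away from $0$, $\tfrac1{u^*}\tr\{\bm S_{\Xmat\Gmat}^{+}\}\to\int_{(0,\infty)}x^{-1}\,d\mu_\gamma(x)=\tfrac1{\gamma(\gamma-1)}$ — a value one also reads off from the expansion $m(-\lambda)=(1-1/\gamma)\lambda^{-1}+(\gamma(\gamma-1))^{-1}+O(\lambda)$ of the explicit formula as $\lambda\downarrow0$ — so $V(\hat\bvec^N)\to\tr\{\Smat_{\epvec}\}/(\gamma-1)$. (Equivalently the $\gamma>1$ formula is $\lim_{\lambda\downarrow0}$ of the limiting EgReg risk once that convergence is shown locally uniform near $\lambda=0$; I would present the direct pseudoinverse computation to avoid interchanging limits.)

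Finally, to optimize $\lambda$, set $\Psi(\lambda):=c^2\lambda^2 m'(-\lambda)+\tr\{\Smat_{\epvec}\}\gamma\bigl(m(-\lambda)-\lambda m'(-\lambda)\bigr)$ and differentiate using $\tfrac{d}{d\lambda}m(-\lambda)=-m'(-\lambda)$ and $\tfrac{d}{d\lambda}m'(-\lambda)=-m''(-\lambda)$; the derivative factors as $\Psi'(\lambda)=\bigl(2m'(-\lambda)-\lambda m''(-\lambda)\bigr)\bigl(\lambda c^2-\gamma\tr\{\Smat_{\epvec}\}\bigr)$. The first factor equals $\tfrac1\lambda\tfrac{d}{d\lambda}\int\bigl(\tfrac{\lambda}{x+\lambda}\bigr)^2 d\mu_\gamma(x)$, which is strictly positive for $\lambda>0$ because $\lambda/(x+\lambda)$ increases in $\lambda$ for every $x\geq0$; hence $\Psi$ decreases then increases with unique minimizer $\lambda^*=\gamma\tr\{\Smat_{\epvec}\}/c^2$, and substituting $\lambda^*$ the $m'(-\lambda^*)$ terms cancel because $\lambda^*c^2=\gamma\tr\{\Smat_{\epvec}\}$, leaving $\Psi(\lambda^*)=\tr\{\Smat_{\epvec}\}\gamma\,m(-\lambda^*)$. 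I expect the main obstacle to be the random-matrix input — especially the a.s.\ deterministic-equivalent statements (iv) under only an $(8+\delta)$-th moment assumption (rather than Gaussianity, where $\bm W$ has i.i.d.\ entries and everything reduces to Wishart algebra), and, in the $\gamma>1$ case, cleanly separating the point mass of $\mu_\gamma$ at $0$ from its absolutely continuous part and evaluating $\int_{(0,\infty)}x^{-1}d\mu_\gamma$; the $\lambda$-optimization and the bias/variance algebra are comparatively routine, and the limits are (as expected) singular at the double-descent point $\gamma=1$, which the statement excludes.
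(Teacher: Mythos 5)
Your proposal is correct and reaches all four conclusions of the theorem, but it departs from the paper's proof in several places worth noting. For the EgReg bias, the paper applies the Rubio--Mestre deterministic equivalent to the trace functional $\tr\{\bm\Theta(\bm S+\lambda(1+\eta)\bm I)^{-1}\}$ with $\bm\Theta=\Gmat'\bvec^*{\bvec^*}'\Gmat$ directly, then obtains the needed $(\bm S+\lambda\bm I)^{-2}$ resolvent by differentiating in the auxiliary variable $\eta$ and exchanging derivative and limit via Lemma 2.14 of Bai and Silverstein (analyticity plus uniform boundedness); your item (iv) --- the almost-sure quadratic-form equivalents for both the first and second resolvent powers --- packages exactly this content, and your instinct that it is the main technical obstacle is right: the second-power version is not an off-the-shelf citation under an $(8+\delta)$ moment condition, and the derivative-of-an-analytic-family argument is how the paper actually manufactures it, so your rank-one decomposition of $\bm\Theta$ is a cosmetic rather than substantive difference. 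For NIECE the routes genuinely diverge: the paper treats both $\gamma<1$ and $\gamma>1$ uniformly by defining $\hat\bvec^N=\lim_{\lambda\to0^+}\hat\bvec^E(\lambda)$ and justifying the interchange of $\lambda\to0^+$ with $n,u^*\to\infty$ via equicontinuity (from Bai--Yin eigenvalue bounds), Arzel\`a--Ascoli, and Moore--Osgood, then reading off the two cases from the expansion of $m$ near zero; you instead compute the $\gamma<1$ case directly and the $\gamma>1$ case by a pseudoinverse calculation, which is cleaner for the variance but still requires a local interchange of limits for the kernel-projection quadratic form $\bm a'\bm\Pi\bm a$ --- your sketch of controlling this via the lower eigenvalue bound is sound but is essentially the same Moore--Osgood step in miniature, so you do not fully escape the interchange. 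Finally, you supply an explicit proof of the $\lambda^*$ optimization via the factorization $\Psi'(\lambda)=\bigl(2m'(-\lambda)-\lambda m''(-\lambda)\bigr)\bigl(\lambda c^2-\gamma\tr\{\Smat_{\epvec}\}\bigr)$ and positivity of the first factor; the paper's appendix omits this step entirely (it defers the $q=1$ case to Theorem 1 and Corollary 5 of Hastie et al.), so this is a welcome addition rather than a deviation.
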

	
	\begin{figure}
		\centering
		\caption{Limiting prediction risk}	
		\label{fig:thm_dd}
		\includegraphics[width=0.9\textwidth]{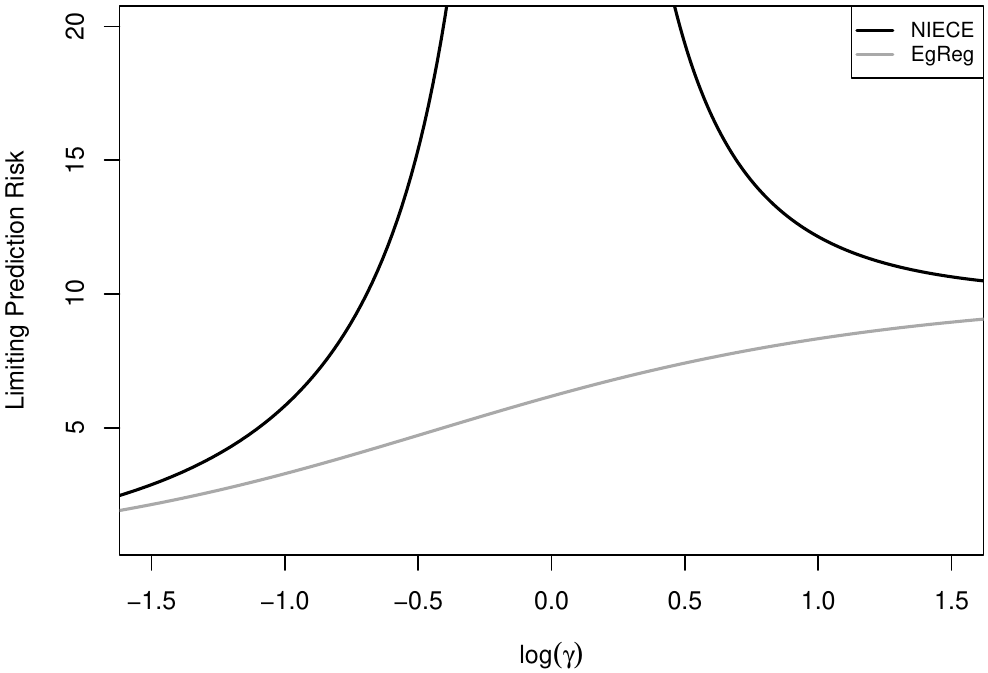}
	\end{figure}
	
	Figure \ref{fig:thm_dd} displays the limiting prediction risk of EgReg at the optimal $\lambda^\ast$ and NIECE, as described in Theorem \ref{thm:limiting risk}, for $\tr\{ \Smat_{\epvec} \}=10$ and $c^2=10$.
	We see that the limiting risk of NIECE rapidly diverges to $\infty$ as $\gamma \to 1$ from above and from below, exhibiting the double descent phenomenon seen in other predictive models \citep{Hastie2022}.
	Comparing the limiting risks in Theorem \ref{thm:limiting risk}, we see that for any $c^2$, $\tr\{\Smat_{\epvec} \}$, and $\gamma$ the limiting risk of EgReg at $\lambda^*$ is lower than that of NIECE.
	As such, EgReg delivers superior predictions to NIECE in the limit, particularly in the case where $u^* \approx n$.
	
	\section{Simulations}\label{sec:sims}
	In this section, we first compare the empirical prediction performance of EgReg with NIECE and a variety of other multivariate regression methods. 
	Next, we use simulations to verify the double descent phenomenon identified in our asymptotic theory. 
	We conclude the section by summarizing key takeaways.
		
	\subsection{Models and metric}
	We include the following methods in our comparison: 
	PLS (specifically the SIMPLS estimator \citep{de_jong_simpls_1993}); PCR; ridge regression; an envelope model fit to the PCs of $\Xmat$ (PCA Env) \citep{rimal2019comparison}; NIECE; Envelope-guided Regularization with both $d$ and $\lambda$ tuned (EgReg); and EgReg with $\lambda$ tuned and $d = r$, the rank of $\Xmat$ (EgReg(r)).
	We tune $\lambda$ for ridge and the EgReg models and $d/u$ for PLS, PCR, PCA Env, EgReg, and NIECE using 10-fold cross-validation (CV).
	
	We can divide these methods along a few axes: for instance, as thresholding methods (PLS, PCR, PCA Env, and NIECE) versus shrinkage methods (ridge and EgReg) or as envelope-based methods (PLS, PCA Env, NIECE, and EgReg) versus singular-value-based methods (PCR and ridge).
	We will pay particular attention to the comparison of EgReg and NIECE, as Theorems \ref{thm:egreg vs niece prediction risk} and \ref{thm:limiting risk} suggest EgReg should outperform NIECE.
	
	Throughout this study we assess the performance of the models based on the prediction risk. Let $\bvec^*$ denote the true coefficient vector. 
	The prediction risk can be expressed as
	$
	R(\hat \bvec|\Xmat)
	= \E[ \lVert \xvec_{new}'\hat \bvec - \xvec_{new}'\bvec^* \rVert_2^2 |\Xmat ]
	= \E[ \tr \{ (\hat \bvec - \bvec^*)' \Smat_{\xvec}(\hat \bvec - \bvec^*) \}|\Xmat ].
	$
	As such, we estimate the prediction risk over $R$ simulation replications with
	$$
	\hat R(\hat \bvec|\Xmat) = \frac{1}{R} \sum_{i = 1}^{R} \tr \{ (\hat \bvec^i - \bvec^*)' \Smat_{\xvec}(\hat \bvec^i - \bvec^*) \}
	$$
	where $\hat \bvec^i$ denotes the coefficient estimate for replication $i$.
	Note that we fix $\Xmat$ and $\bvec^*$ across replications, but vary the generated response $\Ymat$.
	
	\subsection{Predictor envelope model}\label{sim_predictor_envelope}
	For our first two sets of simulations, we generate data from a predictor envelope model,
	\begin{equation}
		\yvec' = \xvec' \Gmat \avec + \epvec' \text{,}
	\end{equation}
	where $\yvec, \epvec \in \reals^q$, $\xvec \in \reals^p$, $\xvec$ and $\epvec$ are independent, $\Gmat \in \reals^{p \times u^*}$ is a matrix with orthogonal columns, and $\avec \in \reals^{u^* \times q}$.
	We generate the data as follows:

	\textit{Input:} $n, p, r, \gamma, \mathcal{P}, \avec, \Smat_{\epvec}$
	\begin{enumerate}
		\item Generate a random orthogonal matrix $\bm V \in \reals^{p \times p}$ of eigenvectors of $\Smat_{\xvec}$.%
		\item Set $\sigma_i = 10 \exp(-\gamma(i - 1))$ for $i = 1, \ldots, p$. 
		Define $\bm D^2 = \diag\{ \sigma_i \}$.
		\item Set $\Smat_{\xvec} = \bm V \bm D^2 \bm V'$.
		\item Generate the predictors $\xvec_i \iid N(\bm 0, \Smat_{\xvec})$ for $i = 1, \ldots, n$.
		\item Set $\Gmat = \bm V_{ \mathcal{P} }$.
		\item Generate the response 
		$ \yvec_i' = \xvec_i' \Gmat \avec + \epvec_i' $,
		where $\epvec_{i} \iid N(\bm 0, \Smat_{\epvec})$ for $i = 1, \ldots, n$.
	\end{enumerate}

	\begin{remark}
		$\gamma$ controls how quickly the eigenvalues of $\Smat_{\xvec}$ decay. 
		$\mathcal{P}$ determines which eigenvectors of $\Smat_{\xvec}$ are basis vectors for the envelope subspace.
		As such, the true envelope dimension $u^* = |\mathcal{P}|$.
		Recall that the population-level envelope scores are given by $\phi_j = \bm v_j' \Smat_{\xvec \yvec} \Smat_{\yvec \xvec} '\bm v_j$.
		Since $\Smat_{\xvec \yvec} =  \Smat_{\xvec} \bvec^* = \Smat_{\xvec} \Gmat \avec = \Smat_{\xvec} \bm{V}_{\mathcal{P}} \avec = \bm{V}_{\mathcal{P}}\bm{D}_{\mathcal{P}}^2 \avec$, we find that
		$
			\phi_j 
			 = \bm v_j' \bm V_{ \mathcal{P} } \bm{D}_{\mathcal{P}}^2\avec \avec' \bm{D}_{\mathcal{P}}^2\bm V_{ \mathcal{P} }' \bm v_j
			 = \bm e_j' \bm{D}_{\mathcal{P}}^2 \avec \avec' \bm{D}_{\mathcal{P}}^2 \bm e_j
			 = \sigma_j^2 \avec_{\cdot j} \avec_{\cdot j}'
			 = \sigma_j^2\norm{\avec_{\cdot j} }_2^2
		$
		where $\bm e_j$ is the $j$th standard basis vector and $\avec_{\cdot j}$ denotes the $j$th row of $\avec$.
		As such, we can control the envelope scores through our choices of $\gamma$ and $\avec$.
	\end{remark}
	
	\begin{remark}
	We do not generate $\Gmat$ with any sort of sparsity structure.
	Rather, all of the predictors can feature into the reduced predictor $\xvec' \Gmat$ and be related to the response, making the true model for the simulations non-sparse.
	\end{remark}
	
	We vary $\mathcal{P}$ and $\avec$ to create different simulation settings.
	In particular, we examine how the location of the first index of $\mathcal{P}$ and the true envelope dimension affect the relative prediction performances of the models.
	Within each setting, we examine the cases where $p/n = 0.25, 0.5, 1, 2, 4$.
	Across all cases, we set $n=100$, $r = 1$, $\gamma = 1$, and $\Smat_{\epvec} = 10$.
	
	\subsubsection{$\mathcal{P}(1)$ study}
	In our first set of simulations, we vary the starting index of $\mathcal{P}$, examining the cases where $\mathcal{P}(1) = 1$ and $\mathcal{P}(1) = 7$. In both cases, we set $\bm V_{\mathcal{P}}$ to consist of ten consecutive eigenvectors, with $\mathcal{P} = \{\mathcal{P}(1), \mathcal{P}(1)+1, \ldots, \mathcal{P}(1)+9\}$, meaning that $u^*=10$, and set the reduced coefficient vector to be $\avec = (1, -1, 1, -1, \ldots)\in\mathbb R^{u^*}$.
	
	Figure \ref{fig:P1 sims} plots the prediction risks for each of the methods across $R = 100$ replications for a range of values of $p$.
	First, we note that the EgReg models outperform NIECE in both cases, providing empirical support for Theorem \ref{thm:egreg vs niece prediction risk}.
	We see that when $\mathcal{P}(1) = 1$ all of the models attain similar prediction risks.
	We would expect PCR and ridge to perform well in this setting---because the first $10$ eigenvectors of $\Smat_{\xvec}$ form a basis for the envelope, the PCs of $\Xmat$ with high variances also have high envelope scores, ensuring that ridge and PCR do not overly shrink/threshold important PCs or under shrink/fail to threshold unimportant ones.
	Given that this setting is particularly favorable to the non-envelope methods, it is notable that EgReg and the other envelope methods still deliver competitive prediction performances.
	
	\begin{figure}
		\centering
		\caption{$\mathcal{P}(1)$ study}
		\label{fig:P1 sims}	
		\includegraphics[width=0.9\textwidth]{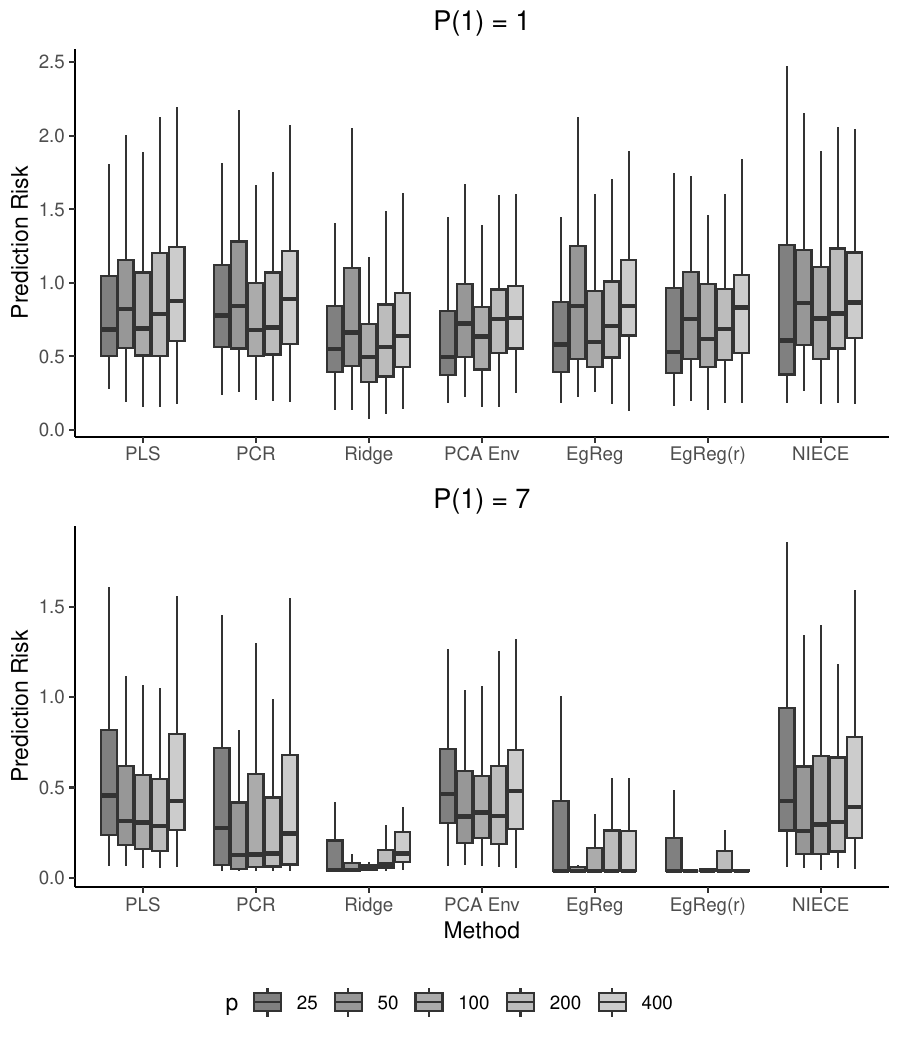}
	\end{figure}
	
	When $\mathcal{P}(1) = 7$ we see that the shrinkage methods (ridge and EgReg) dominate the thresholding methods.
	Notably, EgReg outperforms the envelope estimators, NIECE and PLS, by a wide margin.
	Moreover, we see that EgReg(r) tends to deliver a lower prediction risk than ridge when $p>25$ and that their performances are comparable when $p = 25$.
	Overall, this is a setting in which the EgReg models are clear favorites.
	This result aligns well with our intuition: EgReg is designed to do well when the PCs with the highest variances and those with the highest envelope scores do not align.
	Among the two EgReg models, the fixed-$d$ version (with $d=r$), EgReg(r), outperforms the tuned-$d$ version, EgReg, due to the latter's higher variance.
	This is likely a consequence of having to tune an additional parameter and could be ameliorated by employing a lower-variance tuning method such as leave-one-out CV.
	
	\subsubsection{Envelope dimension $u^*$ study}
	In our second set of simulations, we vary the true envelope dimension $u^* = |\mathcal{P}|$, examining the cases where $u^* = 5$ and $u^* = \min\{n,p\}/2$. 
	In each of these cases, we set $\mathcal{P}$ to consist of every other index, $\mathcal{P} = \{1,3,5,\ldots\}$, and set the entries of $\avec$ to be steadily increasing in magnitude from $0.1$ to $1$ and alternating in sign, with $\avec_j = (-1)^{(j \mod 2)}( 0.1 + j \frac{1 - 0.1}{u^*-1})$ for $j = 0, \ldots, u^*-1$.

	Figure \ref{fig:|P| sims} plots the prediction risks from this set of simulations.
	As in the previous simulations, the EgReg estimators deliver lower prediction risks than NIECE and PLS.
	Moreover, the gap between EgReg and these methods seems to grow as the envelope dimension $u^*$ increases.
	As in the $\mathcal{P}(1) = 7$ simulations, the shrinkage methods dominate the thresholding methods.
	This result again aligns with our intuition: Because the lower-variance PCs are more important in this model, discarding them significantly impacts the performance of the thresholding methods.
	We note that EgReg(r) also outperforms ridge in this setting.
	Thus this is yet another case where the EgReg models perform particularly well.
	
	\begin{figure}
		\centering
		\caption{Envelope dimension $u^*$ study}	
		\label{fig:|P| sims}
		\includegraphics[width=0.9\textwidth]{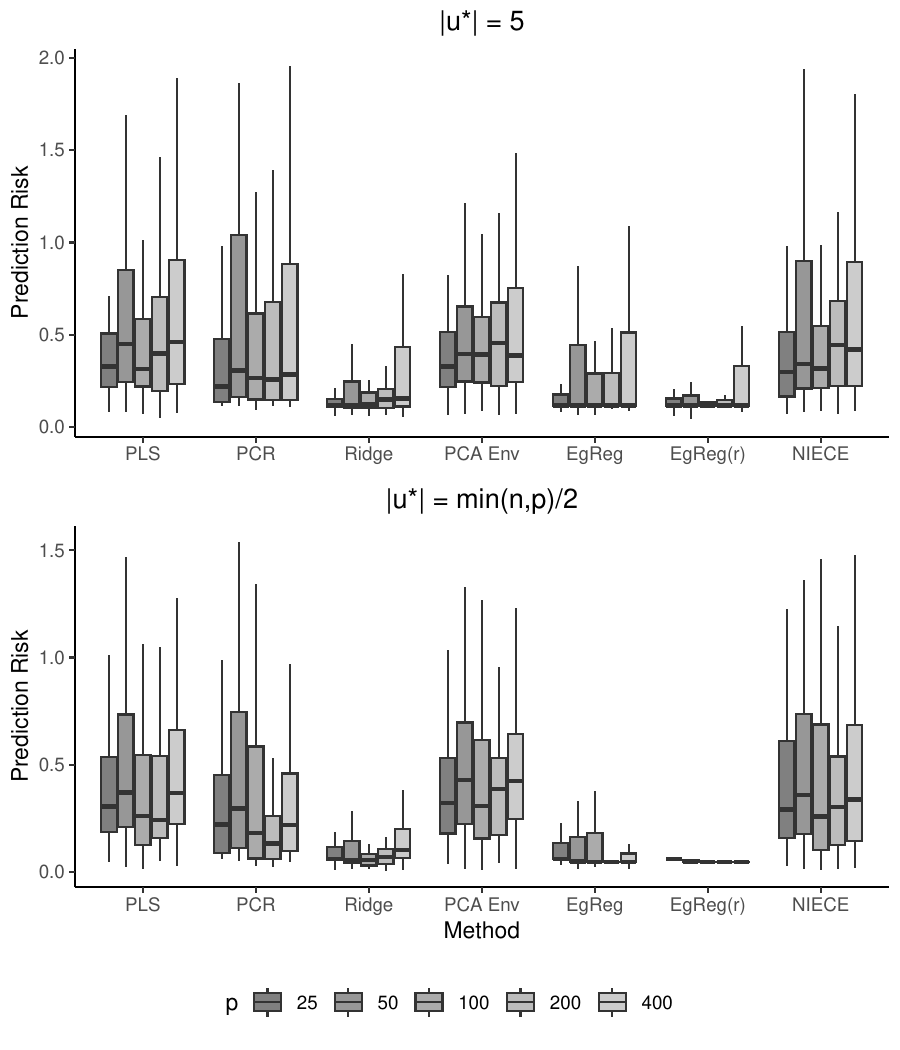}
	\end{figure}
	
	\subsection{No envelope structure}
	For our last set of simulations, we examine a few cases with no envelope structure.
	We generate the data from
	\begin{equation}
		\yvec' = \xvec'\bvec^* + \epvec'
	\end{equation}
	where  $\xvec_i \iid N(\bm 0, \Smat_{\xvec})$  and there is no special relationship between the eigenvectors of $\Smat_{\xvec}$ and $\bvec^*$: 
	We set $\Smat_{\xvec} = \rho \bm 1_{p\times p} + (1 - \rho) \bm I_p $ , the compound symmetric (CS) case, and $[\Smat_{\xvec}]_{ij} = \rho^{|i-j|}$ for $i,j \in [p]$, the auto-regressive (AR1) case.
	For both cases, we set $\rho = 0.5$ and $\bvec^* = (2,-2, 1,-1, 1/2, -1/2, 0, \ldots, 0)$.
	As before, we examine the cases where $p/n = 0.25, 0.5, 1, 2, 4$, with $n=100$.
	Note that only six coefficients are nonzero in the true model, making these sparse regressions.
	Our goal here is to examine how the EgReg models perform in baseline regression scenarios.
	
	Figure \ref{fig:base case simts} plots the prediction risks from the AR1 and CS cases.
	In the AR1 case, the shrinkage methods again hold an edge over the thresholding methods, with ridge and EgReg(r) performing particularly well.
	The pattern is similar in the CS case, though ridge appears to have a slight edge over EgReg and PLS performs relatively well when $p \leq 50$.
	As in the predictor envelope simulations, EgReg and EgReg(r) consistently attain lower risks than NIECE.
	Overall, the performance of the EgReg models, particularly EgReg(r), holds up in these baseline scenarios with no special envelope structure. 
	
	\begin{figure}
		\centering
		\caption{Baseline scenarios}	
		\label{fig:base case simts}
		\includegraphics[width=0.9\textwidth]{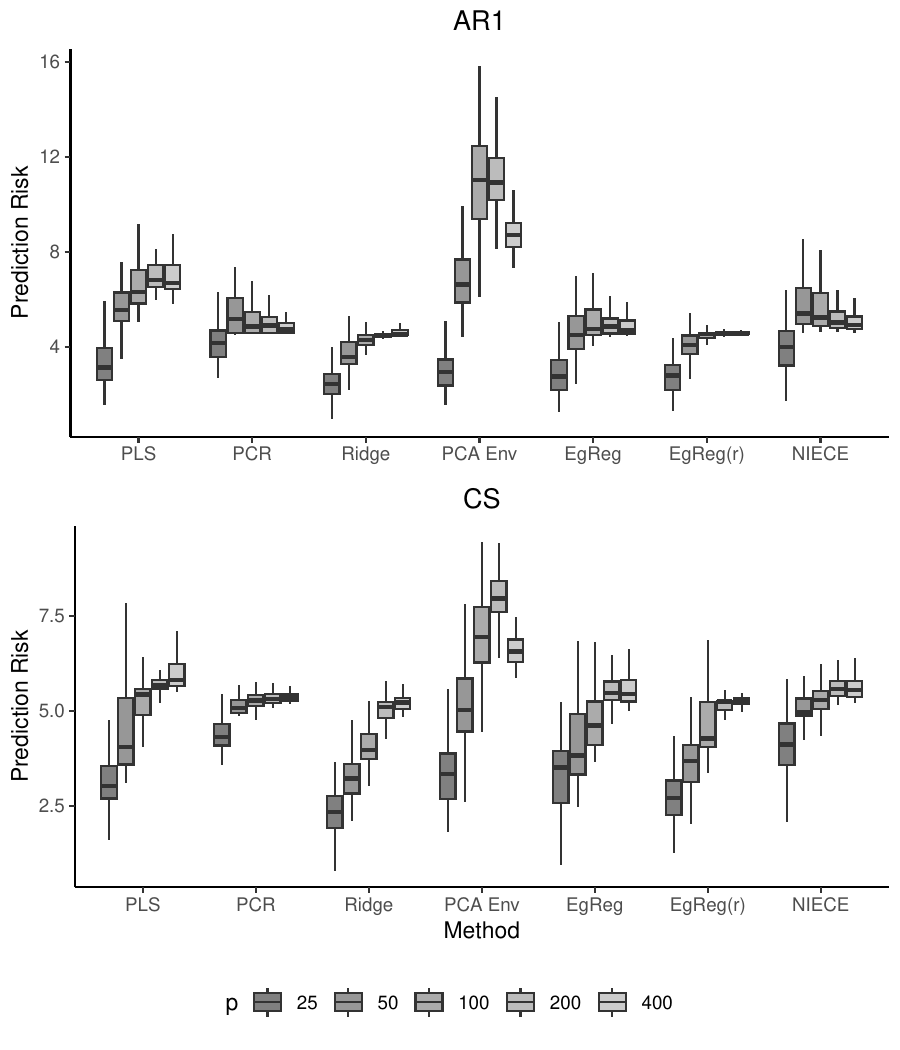}
	\end{figure}
	
	\subsection{Double descent confirmation}
	This subsection uses simulations to explore the double descent phenomenon of the NIECE estimator, as described in Theorem \ref{thm:limiting risk}, and compare the empirical prediction risks of NIECE and EgReg. 
	While the theorem assumes that $\Smat_{\xvec}$ and $\bm \Phi$ are known, in this simulation we treat these parameters as unknown and estimate them, reflecting the real-world scenario where such information is typically unavailable. 
	We examine cases where $u^\ast/n$ varies from $0.2$ to $5$.%
	
	Data are generated from a predictor envelope model using the procedure and setup outlined in Section \ref{sim_predictor_envelope}, with the modification that we set $p = \left\lfloor 1.5\cdot u^\ast \right\rceil$ and $\sigma_i=1$ for $i=1,\ldots,p$. 
	The starting index is set as $\mathcal P(1)=7$ and the index set is defined as $\mathcal P = \{ \mathcal P (1), \mathcal P (1) +1, \ldots, \mathcal P (1) + u^\ast - 1\}$. 
	The reduced coefficient vector is specified as $\avec = \sqrt{10} \cdot \bm \eta / \| \bm \eta \|_2 $ where $\bm \eta = (1, -1, 1, -1, \ldots) \in \mathbb R^{u^\ast}$, ensuring that $c^2=10$. 
	NIECE is computed with $u=\min(u^\ast, n-1)$ (we set $u \leq n-1$ to ensure numerical stability). 
	
	Figure \ref{fig:sim_dd} illustrates the prediction risks for NIECE, EgReg, and EgReg(r), computed across $R = 100$ replications for varying values of $u^\ast$. 
	The empirical prediction risks mirror the theoretical limiting risks from Figure \ref{fig:thm_dd}.
	First, we observe the double descent phenomenon exhibited by the NIECE model. 
	Second, EgReg achieves smaller prediction risks across the range of $u^\ast$ values. 
	These observations align with the implications of Theorem \ref{thm:limiting risk}. Finally, EgReg and EgReg(r) exhibit similar performances, with EgReg(r) showing slightly better prediction risk in general.
	
	\begin{figure}
		\centering
		\caption{Double descent confirmation}	
		\label{fig:sim_dd}
		\includegraphics[width=0.9\textwidth]{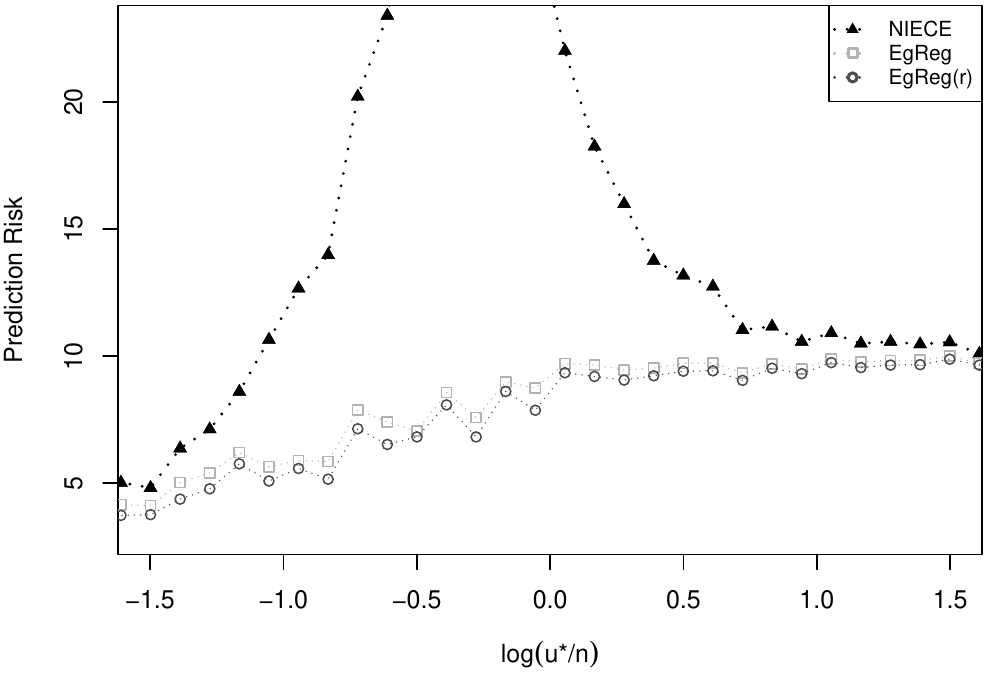}
	\end{figure}
		
	\subsection{Simulation takeaways}
	To summarize the main findings from our simulations, we found that there are settings in which the EgReg models dominate, particularly when there is an underlying predictor envelope structure to the data and some of the PCs which comprise the envelope have relatively low variances.
	Moreover, we saw that even when there is no envelope structure, the EgReg models deliver lower prediction risks than thresholding methods and comparable risks to ridge. 
	Lastly, we observed that NIECE exhibits a double descent risk curve plotted over $u^*/n$ and that the EgReg methods consistently provide lower prediction risks than NIECE over a range of $u^*$ values. 
	The advantage of using EgReg over NIECE is particularly pronounced when $u^*$ and $n$ are close.	
	Together these findings suggest that the EgReg estimators are always worth trying in multivariate prediction problems: They are competitive in every setting we examined and dominate in some cases.
	
	\section{Real Data Analysis}\label{sec:real data}
	In chemometrics, near-infrared (NIR) spectroscopy is widely used to predict chemical properties because it is fast, accurate, and requires minimal sample preparation. Compared to conventional laboratory methods, which are often time-consuming, labor-intensive, and require large amounts of reagents, NIR spectroscopy is much more efficient. 
	
	NIR spectroscopy data of manure samples from poultry and cattle were collected by \cite{goge2021dataset}. The data are available in the Data INRAE Repository at \href{https://doi.org/10.15454/JIGO8R}{\nolinkurl{https://doi.org/10.15454/JIGO8R}}. These data contain 332 manure samples that were analyzed by NIR spectroscopy using a NIRFlex device. NIR spectra were recorded every 2 nm from 1100 to 2498 nm on fresh homogenized samples. Additionally, the cattle manure samples were analyzed for chemical properties, including the amount of dry matter. We use these data to predict dry matter from the absorbance spectra. 
	
	In our analysis, we consider a linear model, where $y_i \in \mathbb{R}$ is the chemical measurement of dry matter and $\xvec_i \in \mathbb{R}^{p}$ is the vector of NIR spectroscopy measurements. To illustrate the behavior of EgReg's prediction performance with different numbers of predictors, we consider several values of $p$ by slicing the spectral data at different intervals: 

	\begin{itemize}
		\item \textbf{Case 1:} $p=70$; NIR spectra recorded every 20nm from 1100 to 2498 nm.
		\item \textbf{Case 2:} $p=140$; NIR spectra recorded every 10nm from 1100 to 2498 nm.
		\item \textbf{Case 3:} $p=350$; NIR spectra recorded every 4nm from 1100 to 2498 nm.
		\item \textbf{Case 4:} $p=700$; NIR spectra recorded every 2nm from 1100 to 2498 nm.
	\end{itemize}
	The first two cases correspond to low-dimensional settings and the last two cases represent high-dimensional settings. Before the analysis, we standardize each spectroscopy and dry matter measurement to have a sample mean of 0 and a standard deviation of 1. 
	
	To compare the prediction performance, we randomly take 80\% of the samples (i.e., training samples of 266) from the data to tune and fit models. 
	We perform 10-fold CV to select tuning parameters for each model. 
	We select $d$ from $\{1,2,\ldots,\min(p,230)\}$ for PLS and PCR and $u$/$d$ from $\{2,4,\ldots,\min(p,230)\}$ for PCA Env, NIECE, and EgReg for computational efficiency. 
	For the ridge and EgReg models, we select $\lambda$ from 50 equally-spaced values on the log-scale using 10-fold CV.
	
	The remaining 20\% of the samples (i.e., test samples of 66) are used as the test set. On the test set, we calculate the relative prediction error (RPE) compared to PLS, which is often regarded as the most typical method in chemometrics, using the following formula:
	$
	\text{RPE} = \sum_{i=1}^{66} (y_i^{\text{test}} - \hat y_i^{\text{test}} )^2 / \sum_{i=1}^{66}  (y_i^{\text{test}} - \hat y^{\text{test}}_{i,\text{PLS}})^2 . 
	$
	We repeat this process 100 times and report the average relative prediction error (ARPE) and its standard error (SE) in Table \ref{tab:real1}. 
	The results show that the EgReg and EgReg(r) methods yield the smallest prediction errors. 
	It is worth highlighting that EgReg and EgReg(r) outperform PLS in every case. 
	Moreover, ridge, the main competitor for EgReg in our simulation studies, does far worse than EgReg in making predictions on these real data.
	
	\begin{table}[h]
		\caption{Near-infrared spectroscopy data: Average relative prediction error (ARPE) and its standard error (SE) (lowest ARPE for each case in \textbf{bold}).}\label{tab:real1}
		\begin{center}
			\small
			\begin{tabular}{c|cccccc} \toprule
				& PCR & PCA Env & NIECE & Ridge & EgReg &  EgReg(r) \\  \hline
				\multicolumn{7}{l}{\textbf{Case 1:} $p = 70$ (low-dimensional).} \\ \hline
				ARPE & 0.978 & 2.870 & 0.994 & 1.244 & \textbf{0.974} & \textbf{0.974} \\
				SE   & 0.004 & 0.042 & 0.004 & 0.014 & 0.004 & 0.003 \\ \hline
				\multicolumn{7}{l}{\textbf{Case 2:} $p = 140$ (low-dimensional).} \\ \hline
				ARPE & 0.988 & 2.873 & 0.991 & 1.279 & 0.975 & \textbf{0.971} \\
				SE   & 0.006 & 0.042 & 0.005 & 0.016 & 0.004 & 0.004 \\ \hline
				\multicolumn{7}{l}{\textbf{Case 3:} $p = 350$ (high-dimensional).} \\ \hline
				ARPE & 0.997 & 2.889 & 0.997 & 1.609 & 0.967 & \textbf{0.960} \\
				SE   & 0.005 & 0.042 & 0.005 & 0.032 & 0.004 & 0.005 \\ \hline
				\multicolumn{7}{l}{\textbf{Case 4:} $p = 700$ (high-dimensional).} \\ \hline
				ARPE & 0.999 & 2.882 & 1.002 & 1.991 & 0.970 & \textbf{0.962} \\
				SE   & 0.005 & 0.042 & 0.005 & 0.026 & 0.005 & 0.004 \\ \bottomrule
			\end{tabular}
		\end{center}
	\end{table}
	
	\section{Discussion}\label{sec:discussion}
	In this paper, we proposed EgReg, an envelope-based shrinkage estimator for improved prediction in multivariate regression.
	We have shown, theoretically and empirically, that EgReg achieves superior prediction accuracy compared to NIECE, a closely related non-shrinkage predictor envelope estimator, in both finite dimensions and as $u^*/n \to \gamma \in (0, \infty)$.
	In simulations and a real data analysis, we have found that EgReg can provide more accurate predictions than popular multivariate prediction methods---PCR, ridge, and PLS---in both low and high dimensions.

	EgReg is, to our knowledge, the first estimator to use envelopes to guide shrinkage.
	As such, there are numerous ways to extend the EgReg framework in future research.
	EgReg applies a ridge-like penalty to the reduced predictor coefficient $\evec$ to improve the prediction accuracy.
	We could generalize the EgReg objective \eqref{eqn:reduced ridge problem} to
	$
		\hat \evec = \argmin_{\eta \in \reals^d } \norm{\Ymat - \Xmat \hatmat \Gamma \evec}_F^2 + p_{\lambda}(\evec),
	$
	where $p_{\lambda}(\cdot)$ is another penalty, such as the lasso \citep{Tibshirani1996}, elastic net \citep{Zou2005}, or group lasso \citep{yuan_model_2006}.
	Alternatively, we could use a different ``greedy'' estimate of the envelope basis, replacing $\hatmat \Gamma = \hatmatd{V} \hatmatd{D}^{-1}\hatmatd\Phi^{1/2}$ with, for example, the PLS-estimated envelope basis \citep{Cook2013} or the penalized matrix decomposition estimate of $\bm V_d$ \citep{Witten2009}. 
	Theoretically, we could derive the rate of the EgReg predictions as \citet{cook_partial_2019} did for PLS, particularly in the context of abundant regression where both $p$ and the covariance between the predictors and the response increase at the same rate. 
	Investigating EgReg under this regime would be an interesting direction for future research as EgReg outperformed PLS in our simulations and real data analysis.
	Lastly, we could generalize the EgReg framework for envelope-guided shrinkage in other predictive models, such as Huber regression and quantile regression.
	
	\if0\withappendix{
		\bigskip
		\begin{center}
			{\large\bf SUPPLEMENTARY MATERIAL}
		\end{center}
		
		\begin{description}
			\item[egreg\_supp:] Contains proofs of the theoretical results in the paper. (.pdf)
			
			\item[egrep\_repro:] Contains R code to reproduce the empirical results in the paper. (.zip)
		\end{description}
	}\fi
	
	\bibliographystyle{apalike}
	\bibliography{./envlp}
	
	\if1\withappendix{
		
		\appendix
		
		\section{Theoretical Proofs} \label{sec:proofs}
		\begin{proof}[Proof of Theorem \ref{thm:egreg vs niece prediction risk}]
			We start by showing that when EgReg and NIECE use the same number of PCs $u$, there always exists $\lambda > 0$ such that 
			\begin{equation}
				R_r(\hat \bvec^E(u, \lambda)|\Xmat,\hatmat \Phi) < R_r(\hat \bvec^N(u)|\Xmat,\hatmat \Phi) . \label{eqn:fixed u risk comparison}
			\end{equation}
			We see that when $\lambda = 0$, 
			$\hat \bvec^E(u, 0) = \hatmatd{V} \hatmatd{D}^{-1} \hatmatd \Phi \hatmatd \Phi^{-1} \hatmatd{U}'\Ymat = \hat \bvec^N(u)$.
			As such,\\ $R_r(\hat \bvec^E(u, 0)|\Xmat,\hatmat \Phi) = R_r(\hat \bvec^N(u)|\Xmat,\hatmat \Phi)$.
			Therefore we need only show that there exists $\lambda' > 0$ such that $\frac{\partial}{\partial \lambda} R_r(\hat \bvec^E(u,\lambda)|\Xmat,\hatmat \Phi) < 0$ for all $\lambda \in (0, \lambda')$ to establish \eqref{eqn:fixed u risk comparison}. 
			
			Taking the derivative of \eqref{eqn:EgReg prediction risk} with respect to $\lambda$, one can show that
			\begin{multline*}
				\frac{\partial}{\partial \lambda} R_r(\hat \bvec^E(u,\lambda)|\Xmat,\hatmat \Phi)
				= -2 \tr\{  \Smat_{\epvec} \} \tr \{\hatmatd{V}' \Smat_{\xvec} \hatmatd{V} \hatmatd{D}^{-2} \hatmatd \Phi^2 (\hatmatd \Phi + \lambda \bm I_d )^{-3} \} \\
				+ 2\lambda \tr \{ \bvec^* {\bvec^*}' \hatmatd{V} \hatmatd \Phi (\hatmatd \Phi + \lambda \bm I_d )^{-2} \hatmatd{V}' \Smat_{\xvec} \hatmatd{V} (\hatmatd \Phi + \lambda \bm I_d )^{-1} \hatmatd{V}' \} .
			\end{multline*}
			For a matrix $\bm M$, we define $\sigma_j(\bm M)$ to be its $j$th largest eigenvalue. 
			By von Neumann's trace inequality and the cyclic property of eigenvalues, we find
			\begin{align*}
				\tr \{ \bvec^* {\bvec^*}' \hatmatd{V} & \hatmatd \Phi (\hatmatd \Phi + \lambda \bm I_d )^{-2} \hatmatd{V}' \Smat_{\xvec} \hatmatd{V} (\hatmatd \Phi + \lambda \bm I_d )^{-1} \hatmatd{V}' \}\\
				& \leq \sum_{j=1}^p \sigma_j(\bvec^* {\bvec^*}') \sigma_j(\hatmatd{V} \hatmatd \Phi (\hatmatd \Phi + \lambda \bm I_d )^{-2} \hatmatd{V}' \Smat_{\xvec} \hatmatd{V} (\hatmatd \Phi + \lambda \bm I_d )^{-1} \hatmatd{V}')\\
				& \leq \sigma_1(\bvec^* {\bvec^*}') \sum_{j=1}^p  \sigma_j(\hatmatd \Phi(\hatmatd \Phi + \lambda \bm I_d )^{-3} \hatmatd{V}' \Smat_{\xvec} \hatmatd{V}) \\
				& = \sigma_1(\bvec^* {\bvec^*}') \sum_{j=1}^p  \sigma_j( \hatmatd \Phi^{-1} \hatmatd{D}^2  \hatmatd{D}^{-2}  \hatmatd \Phi^2(\hatmatd \Phi + \lambda \bm I_d )^{-3} \hatmatd{V}' \Smat_{\xvec} \hatmatd{V})\\
				& \leq \sigma_1(\bvec^* {\bvec^*}') \sum_{j=1}^p  \sigma_j( \hatmatd \Phi^{-1} \hatmatd{D}^2 )\sigma_j( \hatmatd{D}^{-2} \hatmatd \Phi^2(\hatmatd \Phi + \lambda \bm I_d )^{-3} \hatmatd{V}' \Smat_{\xvec} \hatmatd{V})\\
				& \leq \sigma_1(\bvec^* {\bvec^*}') \sigma_1( \hatmatd \Phi^{-1} \hatmatd{D}^2 ) \sum_{j=1}^p \sigma_j(\hatmatd{V}' \Smat_{\xvec} \hatmatd{V} \hatmatd{D}^{-2} \hatmatd \Phi^2(\hatmatd \Phi + \lambda \bm I_d )^{-3} )
			\end{align*}
			Applying this to the previous expression for the risk yields
			\begin{multline*}
				\frac{\partial}{\partial \lambda} R_r(\hat \bvec^E(u,\lambda)|\Xmat,\hatmat \Phi)
				\leq 2( \lambda \sigma_1(\bvec^* {\bvec^*}') \sigma_1( \hatmatd \Phi^{-1} \hatmatd{D}^2 ) -  \tr\{  \Smat_{\epvec} \}) \\
				\times \sum_{j=1}^p \sigma_j(\hatmatd{V}' \Smat_{\xvec} \hatmatd{V} \hatmatd{D}^{-2} \hatmatd \Phi^2(\hatmatd \Phi + \lambda \bm I_d )^{-3} )
			\end{multline*}
			Since $\hatmatd{V}' \Smat_{\xvec} \hatmatd{V}  \hatmatd{D}^{-2}  \hatmatd \Phi^2(\hatmatd \Phi + \lambda \bm I_d )^{-3}$ is positive semi-definite the sum is non-negative.
			As such,	if $\lambda < \tr\{  \Smat_{\epvec}\}/(\sigma_1(\bvec^* {\bvec^*}')\sigma_1( \hatmatd \Phi^{-1} \hatmatd{D}^2)) $, then $\frac{\partial}{\partial \lambda} R_r(\hat \bvec^E(u,\lambda)|\Xmat,\hatmat \Phi) < 0$ and \eqref{eqn:fixed u risk comparison} holds.
			
			To finish the proof, we note that for a fixed $\lambda$ there always exists $d\geq u$, namely $d = u$, such that $ R_r(\hat \bvec^E(d, \lambda)|\Xmat, \hatmat \Phi) \leq R_r(\hat \bvec^E(u, \lambda)|\Xmat,\hatmat \Phi)$.
			Combining this with \eqref{eqn:fixed u risk comparison} yields the desired result.
		\end{proof}
		
		\begin{proof}[Proof of Theorem \ref{thm:limiting risk}]
			Define $\Wmat =  \Xmat \matsub{V}{u^*}$. Under our assumptions $\Wmat = \Zmat \Smat_{\xvec}^{1/2} \matsub{V}{u^*} = \Zmat \Smat_{\bm w}^{1/2}$ and $\Smat_{\bm w} = \matsub{V}{u^*}'\Smat_{\xvec}\matsub{V}{u^*} = \matsub{V}{u^*}'\matsub{V}{u^*}\matsub{D}{u^*}^2\matsub{V}{u^*}'\matsub{V}{u^*} =\matsub{D}{u^*}^2 = \bm I $.
			In addition, $\Xmat \Gmat = \Xmat \matsub{V}{u^*}$.
			If $r = 1$, then the conditions of Theorem 1 and Corollary 5 of \citet{Hastie2022} are satisfied and the proof is complete.
			
			We now examine the case where $r > 1$, starting with EgReg.
			We have established that the bias is $B(\hat \bvec^E(\lambda)) = \lambda^2 \tr\{ \Gmat'\bvec^* {\bvec^*}'\Gmat (\bm S_{\Xmat \Gmat} + \lambda \bm I)^{-2} \}$.
			We define $\bm \Theta = \Gmat'\bvec^* {\bvec^*}'\Gmat$ and $\bm S = \bm S_{\Xmat \Gmat}$ to simplify notation. 
			Define $g_{n,\lambda}(\eta) = \lambda \tr\{ \bm \Theta (\bm S + \lambda (1 + \eta) \bm I)^{-1} \} $ for $\eta \in \mathcal D =  \{|\eta|<1\} \in \mathbb{C}$.
			We see that $- \frac{\partial}{\partial \eta} g_{n,\lambda}(0) = \lambda^2 \tr\{ \bm \Theta (\bm S + \lambda \bm I)^{-2} \} = B(\hat \bvec^E(\lambda))$.
			We will study the limiting behavior of $g_{n,\lambda}(\eta)$ and use that as a stepping stone to evaluate the limiting behavior of its derivative.
			
			Theorem 1 of \citet{Rubio2011} provides that
			\begin{equation}
				\tr\{\bm \Theta (\bm S + \lambda (1 + \eta) \bm I)^{-1}\} - \tr\{\bm \Theta (a_n \bm I + \lambda (1 + \eta) \bm I)^{-1}\} \asto 0 \label{eqn:det equiv 1}
			\end{equation}
			as $n,u^* \to\infty$, where $a_n$ is the unique solution to 
			$
			\frac{1}{\gamma}\left(\frac{1}{a_n} - 1\right) = \frac{1}{u^*}\tr\{ (a_n \bm I + \lambda (1 + \eta) \bm I)^{-1}\}.
			$
			We aim to evaluate the limit of $ \tr\{ \bm \Theta (a_n \bm I + \lambda (1 + \eta) \bm I)^{-1}\} = \frac{c^2}{a_n + \lambda (1 + \eta)}$.
			Applying Theorem 1 of \citet{Rubio2011}, we see that
			$
			\tr\{(\bm S + \lambda (1 + \eta) \bm I)^{-1}\} - \tr\{ (a_n \bm I + \lambda (1 + \eta) \bm I)^{-1}\} \asto 0
			$.
			At the same time, the Marchenko-Pasteur Theorem provides that
			$
			\frac{1}{u^*}\tr\{(\bm S + \lambda (1 + \eta) \bm I)^{-1}\} \asto m(-\lambda (1 + \eta))
			$
			where $m(\cdot)$ is the Stieltjes transform of the limiting spectral distribution of $\Smat_{\bm w}$.
			Combining these results yields $\frac{1}{a_n + \lambda(1 + \eta)} = \frac{1}{u^*}\tr\{ (a_n \bm I + \lambda (1 + \eta) \bm I)^{-1}\} \asto m(-\lambda(1 + \eta))$.
			By \eqref{eqn:det equiv 1}, this implies $g_n(\lambda)  \asto c^2 \lambda m(-\lambda(1 + \eta))$.
			
			We find $|g_{n,\lambda}(\eta) | \leq |c^2 \lambda[ \lambda(1 + \eta) ]^{-1} | \leq c^2 \lambda (\lambda/2)^{-1} \leq 2c^2$ for all $n, \eta \in \mathcal{D}$.
			In addition, we see that $g_{n,\lambda}(\eta)$ is analytic.
			As such, we can apply Lemma 2.14 of \citet{Bai2010} and exchange the derivative with respect to $\lambda$ and the limit with respect to $n,u^*$, yielding
			$
			\frac{\partial}{\partial \eta} g_{n,\lambda}(\eta) \asto -c^2 \lambda^2 m'(-\lambda(1 + \eta))
			$.
			Therefore $B(\hat \bvec^E(\lambda)) = - \frac{\partial}{\partial \eta} g_{n,\lambda}(0) \asto c^2 \lambda^2 m'(-\lambda) $.
			
			Moving on to the variance term, we find
			$
			\tr\{ (\bm S + \lambda \bm I)^{-2}\bm S  \}
			= \tr\{ (\bm S + \lambda \bm I)^{-2}(\bm S + \lambda \bm I - \lambda \bm I ) \} 
			= \tr\{ (\bm S + \lambda \bm I)^{-1} \} - \lambda \tr\{ (\bm S + \lambda \bm I)^{-2} \}
			$. 
			We have already shown that $\frac{1}{u^*} \tr\{ (\bm S + \lambda \bm I)^{-1} \} \asto m(-\lambda) $.
			By defining $h_{n,\lambda}(\eta) = \tr\{ (\bm S + \lambda (1 + \eta) \bm I)^{-1}\}$ and employing the same strategy as we did with $g_{n,\lambda}(\eta)$, one can show that $ \lambda \frac{1}{u^*} \tr\{ (\bm S + \lambda \bm I)^{-2} \} \asto \lambda m'(-\lambda).$
			Combining these results yields
			\begin{align*}
				V(\hat \bvec^E(\lambda)) 
				& = \frac{u^*}{n} \tr\{ \Smat_{\epvec} \} \left( \frac{1}{u^*}\tr\{ (\bm S + \lambda \bm I)^{-1} \} - \lambda \frac{1}{u^*} \tr\{ (\bm S + \lambda \bm I)^{-2} \} \right)\\
				& \asto \gamma \tr\{ \Smat_{\epvec} \} ( m(-\lambda) - \lambda m'(-\lambda) ),
			\end{align*}
			completing this portion of the proof.

			Next, we evaluate the limiting risk of NIECE. Our strategy will be to leverage the fact that $\hat \bvec^N = \lim_{0\to \lambda^+} \hat \bvec^E(\lambda)$ and exchange the limits with respect to $u^*,n\to \infty$ and $\lambda \to 0^+$ to derive the limiting prediction risk.
			We have three sequences of functions we need to take the limits of: $f_n(\lambda) = \lambda^2 \tr\{ \bm \Theta (\bm S + \lambda \bm I)^{-2} \}$, $g_n(\lambda) = \frac{1}{u^*}\tr\{ (\bm S + \lambda \bm I)^{-1} \} $, and $h_n(\lambda) = \lambda \frac{1}{u^*} \tr\{ (\bm S + \lambda \bm I)^{-2} \}$.
			
			We have already shown that $f_n(\lambda) \asto c^2 \lambda^2 m'(-\lambda)$ for all $\lambda > 0$.
			Let $\sigma_{\min}^+(\bm S)$ and $\sigma_{\max}^+(\bm S)$ denote the smallest and largest positive eigenvalues, respectively, of $\bm S$.
			We see that 
			$
			|f_n(\lambda)| \leq c^2 \frac{\lambda^2}{ (\sigma_{\min}^+ (\bm S) + \lambda)^2 } \leq c^2
			$ for all $n, \lambda$.
			One can show that $\frac{\partial }{\partial \lambda }f_n(\lambda) = 2\lambda \tr\{ \bm \Theta (\bm S + \lambda \bm I)^{-3} \bm S\}$.
			Theorem 1 of \citet{Bai1993} implies that $\sigma_{\min}^+(\bm S) > (1 - \sqrt{\gamma})^2/2$ and $\sigma_{\max}^+(\bm S) < 2(1 + \sqrt{\gamma})^2$ almost surely for sufficiently large $n$.
			Applying this result, we find $|\frac{\partial }{\partial \lambda }f_n(\lambda)| \leq 2 c^2 \lambda \frac{\sigma_{\max}^*(\bm S)}{( \sigma_{\min}^+(\bm S) + \lambda )^3} \leq 2 c^2  \frac{\sigma_{\max}^*(\bm S)}{( \sigma_{\min}^+(\bm S) + \lambda )^2} \leq 16 c^2 \frac{(1 + \sqrt{\gamma})^2 }{(1 - \sqrt{\gamma})^4} $ almost surely for all $\lambda>0$ and sufficiently large $n$.
			Because the sequence of derivatives $\{ \frac{\partial }{\partial \lambda }f_n(\lambda) \}$ is uniformly bounded, $\{ f_n(\lambda) \}$ is equicontinuous. 
			Combining this with the fact that $\{ f_n(\lambda) \}$ is uniformly bounded and converges almost surely, we can apply the Arzela-Ascoli Theorem to conclude that  $\{ f_n(\lambda) \}$ converges uniformly.
			As an immediate consequence, the Moore-Osgood Theorem enables us to exchange the limits with respect to $n,u^*\to \infty$ and $\lambda \to 0^+$ and conclude that
			$
			\lim_{n,u^* \to \infty} \lim_{\lambda \to 0^+} f_n(\lambda) = \lim_{\lambda \to 0^+} \lim_{n,u^* \to \infty}  f_n(\lambda) =  \lim_{\lambda \to 0^+} c^2 \lambda^2 m'(-\lambda)
			$.
			
			By the same argument, the limits with respect to $n,u^*\to \infty$ and $\lambda \to 0^+$ can  be exchanged for $\{ g_n(\lambda) \}$ and $\{ h_n(\lambda) \}$ as well.
			All together, we have that
			\begin{align*}
				\lim_{n,u^* \to \infty} R(\hat \bvec^N) 
				& = \lim_{n,u^* \to \infty} \lim_{\lambda \to 0^+} R(\hat \bvec^E(\lambda))
				= \lim_{\lambda \to 0^+} \lim_{n,u^* \to \infty} R(\hat \bvec^E(\lambda))\\
				& = \lim_{\lambda \to 0^+} c^2 \lambda^2 m'(-\lambda) + \tr\{ \Smat_{\epvec} \} \gamma( m(-\lambda) - \lambda m'(-\lambda) ).
			\end{align*}
			One can show that $m(z) = (1 - \gamma)^{-1} + O(z)$ for $\gamma < 1$ and $m(z) = -\frac{(\gamma - 1)}{\gamma z} + \frac{1}{(\gamma - 1) \gamma} + O(z)$ for $\gamma > 1$, where ``$f(z) = O(g(z))$'' means there exists $C > 0$ such that $|f(z)| \leq C |g(z)|$ for all $z > 0$.
			Using these bounds, we find that for $\gamma < 1$,
			$ \lim_{\lambda \to 0^+} c^2 \lambda^2 m'(-\lambda) + \tr\{ \Smat_{\epvec} \} \gamma( m(-\lambda) - \lambda m'(-\lambda) ) 
			= \lim_{\lambda \to 0^+}  c^2 O( \lambda^2 ) + \tr\{\Smat_{\epvec} \} \gamma ( (1 - \gamma)^{-1} + O(\lambda) - O(\lambda) ) 
			= \tr\{\Smat_{\epvec} \} \gamma (1 - \gamma)^{-1}$.
			Likewise, for $\gamma > 1$ we see that
			$ \lim_{\lambda \to 0^+} c^2 \lambda^2 m'(-\lambda) + \tr\{ \Smat_{\epvec} \} \gamma( m(-\lambda) - \lambda m'(-\lambda) ) 
			= \lim_{\lambda \to 0^+} c^2 \frac{\gamma - 1}{\gamma} + O(c^2 \lambda^2) + \tr\{\Smat_{\epvec} \} \gamma ( \frac{(\gamma - 1)}{\gamma \lambda} + \frac{1}{(\gamma - 1) \gamma} + O(\lambda) - \frac{(\gamma - 1)}{\gamma \lambda} - O(\lambda) )
			= c^2 \frac{\gamma - 1}{\gamma} + \tr\{\Smat_{\epvec} \} \frac{1}{\gamma - 1}$.
			Combining these expressions yields \eqref{eqn:limiting NIECE risk}, completing the proof.
		\end{proof}	
	
	}\fi
	
\end{document}